\newtheorem{theorem}{Theorem}
\newtheorem{lemma}[theorem]{Lemma}
\newtheorem{definition}[theorem]{Definition}
\newcommand{\sname}{\texttt{SyncPCN}\xspace}
\newcommand{\aname}{\texttt{PSyncPCN}\xspace}
\newcommand{\channel}{\gamma}
\newcommand{\mpath}{\mathsf{path}}
\newcommand{\mhp}{\mathsf{mhp}}
\newcommand{\comalgo}[1]{\texttt{\textcolor{Maroon}{/\!/ #1}}}
\newcommand{\adv}{\mathcal{A}}
\newcommand{\define}{:=}
\newcommand{\timebroadcast}{\delta}
\newcommand{\condpay}{\mathsf{CP}}
\newcommand{\descOfX}{\mathsf{desc}}
\newcommand{\itemOfX}{\mathsf{i}}
\newcommand{\cmark}{\ding{51}}%
\newcommand{\xmark}{\ding{55}}%
\newcommand{\cond}{\mathtt{cond}}
\newcommand{\unlocked}{\mathtt{unlocked}}
\newcommand{\locked}{\mathtt{locked}}
\newcommand{\paid}{\mathtt{paid}}
\newcommand{\revoked}{\mathtt{revoked}}
\newcommand{\pluseq}{\mathrel{{+}{=}}}
\newcommand{\minuseq}{\mathrel{{-}{=}}}
\begin{document}
	\title{\sname{}/\aname{}: Payment Channel Networks without Blockchain Synchrony} 

\author{O\u{g}uzhan Ersoy}
\email{oguzhan.ersoy@ru.nl}
\affiliation{%
	\institution{Radboud University\\Delft University of Technology}
}

\author{J\'er\'emie Decouchant}
\email{j.decouchant@tudelft.nl}
\affiliation{%
	\institution{Delft University of Technology}
}

\author{Satwik Prabhu Kumble}
\email{s.prabhukumble@tudelft.nl}
\affiliation{%
	\institution{Delft University of Technology}
}

\author{Stefanie Roos}
\email{s.roos@tudelft.nl}
\affiliation{%
	\institution{Delft University of Technology}
}

\begin{abstract}
	
	Payment channel networks (PCNs) enhance the scalability of block\-chains by allowing parties to conduct transactions off-chain, i.e, without broadcasting every transaction to all blockchain participants. To conduct transactions, a sender and a receiver can either establish a direct payment channel with a funding blockchain transaction or leverage existing channels in a multi-hop payment. 
	The security of PCNs usually relies on the synchrony of the underlying blockchain, i.e., 
	evidence of misbehavior needs to be published on the blockchain within a time limit.
	Alternative payment channel proposals that do not require blockchain synchrony rely on quorum certificates and use a committee to register the transactions of a channel. However, these proposals do not support multi-hop payments, a limitation we aim to overcome. 
	
	In this paper, we demonstrate that it is in fact impossible to design a multi-hop payment protocol with both network asynchrony and faulty channels, i.e., channels that may not correctly follow the protocol.   
	We then detail two committee-based multi-hop payment protocols that respectively assume synchronous communications and possibly faulty channels, or asynchronous communication and correct channels.  
	The first protocol relies on possibly faulty committees instead of the blockchain to resolve channel disputes, and enforces privacy properties within a synchronous network.
	The second one relies on committees that contain at most $f$ faulty members out of $3f{+}1$ and successively delegate to each other the role of eventually completing a multi-hop payment.  
	We show that both protocols satisfy the security requirements of a multi-hop payment and compare their communication complexity and latency. 
	\let\thefootnote\relax\footnotetext{
\fbox{\parbox{\dimexpr.95\columnwidth\fboxsep-2\fboxrule\relax}{This document is a preprint of a paper accepted at the ACM conference on Advances in Financial Technologies (AFT 2022).}}
}
\end{abstract}
	
\maketitle
	
\section{Introduction}
 
Payment channel networks (PCNs)~\cite{gudgeon2020sok} allow faster,  cheaper, and more energy-efficient transactions than blockchains.  Their key advantage is that most transactions can be conducted using only local communication whereas blockchains usually require broadcasting transactions to all participants for verification.  The largest PCN, Lightning~\cite{poon2016bitcoin},
has a capacity of more than 100 million euros as of January 29, 2022~\cite{1ml}.  

In a PCN,  two parties can open and fund bi-directional channels using a blockchain transaction. Once their channel has been opened, the two parties can conduct transactions by updating their shared account and moving the coins from one side to the other. 
Furthermore,  multi-hop payments allow forwarding a payment between a sender and receiver via a path of channels, enabling transactions between parties that have not opened a dedicated shared channel~\cite{DBLP:conf/sp/AumayrMEEFRHM21,DBLP:conf/sp/DziembowskiEFM19,DBLP:conf/eurocrypt/DziembowskiEFHH19,DBLP:conf/uss/AumayrMKM21,poon2016bitcoin,DBLP:conf/fc/0001BBKM19,DBLP:conf/ndss/MalavoltaMSKM19,lind2018teechain}.  
In a multi-hop payment, parties first lock coins of a channel with regard to a condition, then the payment is either executed by satisfying the condition or is revoked.

\subsection{Motivation}

The security of bi-directional payment channels relies on the underlying blockchain.
In case of a dispute about the channel balance, both parties can publish what they consider the most recent state on the blockchain, which can then determine the correct balance.  
For example, in a case where a malicious party publishes an old state of the channel balance, the honest party has to react to the published state. 
Within $\Delta$ blocks, the honest party can publish a penalty transaction~\cite{poon2016bitcoin} or raise a dispute~\cite{raiden} and claim the coins of the channel.
If the state of the honest party is not published in time --- which might be challenging in times where many transactions are waiting to be validated as well as in periods of network asynchrony --- she can lose her rightfully owned coins.  
Thus, blockchain synchrony is required to guarantee that disputes can be resolved and honest parties do not lose their coins.  

Second, for multi-hop payments in PCNs, parties lock coins in each channel for some time.
For example, in the Lightning Network, the locking time is typically chosen in the order of hours or days~\cite{mizrahi2020congestion}, and parties should complete the payment within this time period.  
Similar to payment channels, honest parties have to react to malicious updates within the time interval.  Thus,  there is a synchrony requirement that is not always satisfied, especially under attack~\cite{refBCsycn,refBCsycn2,harris2020flood}. 
For instance,  Flood\&Loot~\cite{harris2020flood} congests the underlying blockchain with channel closure requests containing invalid states. Due to the high number of transactions competing to be included in a block,  honest parties are unable to have the correct state included within a block in time and hence lose coins.   So, liveness faults of the blockchain layer, i.e.,  not being able to publish a transaction within time $\Delta$, cause safety faults,  i.e., loss of funds,  for payment channel networks.  

\subsection{Related Work}

Several works investigate asset transfers in asynchronous networks~\cite{gupta2016non,guerraoui2019consensus,collins2020online,auvolat2020money,naor2022payment}.
Concurrently and independently from our work, Naor and Keidar showed that establishing a payment channel with faulty parties on top of an asynchronous asset transfer system is impossible~\cite{naor2022payment}. More precisely, they demonstrate that only a unidirectional payment channel,  which in addition does not allow the payer in the channel to initiate closure, can be realized in this model.
They suspect that it might be possible to design a payment channel network using uni-directional channels where only the payee can close the channel with the use of HTLCs (hash-time lock contracts) from which the timeout component is removed. 
However, from the brief discussion given in the paper, it is unclear how to guarantee progress,  i.e.,  how to ensure that payments are eventually either successful or revoked,  without timeouts.  
There are three approaches that relate to blockchain asynchrony,  i.e.,  work for blockchains without the guarantee to have transactions included within time $\Delta$. First,  watchtowers~\cite{dryja2016unlinkable} allow parties to defer publication of the state to third parties.  In this manner,  
parties can become unresponsive, e.g., due to a denial-of-service attack, without being vulnerable to a malicious act by the other party in the channel. Yet, watchtowers are ineffective if the third parties are also not able to publish a transaction in time.   
In other words, watchtowers displace the responsibility of observing the blockchain but do not alleviate it.  
Second,  trusted execution environments (TEEs) can be leveraged to prevent the publication of invalid states, thus removing the need for a dispute period~\cite{lind2018teechain}.  Yet,  Intel SGX,  the TEE used in the existing designs,  is vulnerable to various side channel attacks~\cite{brasser2017software, wang2017leaky} and deprecated\footnote{\url{https://edc.intel.com/content/www/us/en/design/ipla/software-development-platforms/client/platforms/alder-lake-desktop/12th-generation-intel-core-processors-datasheet-volume-1-of-2/001/deprecated-technologies/}}. 
Third,  Brick~\cite{avarikioti2019brick} realizes asynchronous payment channels by involving a committee into the channel.  In exchange for fees, rational committee members keep track of the state of the channel and ensure that only valid states are published on the blockchain.  However,  Brick is only applicable for single-hop channels, not for PCNs.  

\subsection{Our Contributions}
We investigate multi-hop payments in PCNs with blockchain asynchrony.
By reducing the problem of a multi-hop payment to a fair exchange,  we show that multi-hop payments are only possible if the network is synchronous or if all channels in a multi-hop payment behave according to the protocol or in favor of honest parties, which can be enforced by a trusted third party. 

We design two protocols based on BFT committees: \sname and \aname.
\sname{} assumes synchronous communication and parties only need to trust the committees of their own channels, not the committees of other channels possibly involved in a multi-hop payment.   
\sname{} resembles the AMHL protocol, which is an improvement over Lightning in terms of privacy and security. Unlike AMHL, in \sname, each channel registers payments with a committee that enables dispute resolution among the parties without the need for blockchain synchrony. 

In contrast, \aname{} tolerates partially-synchronous communication in addition to blockchain asynchrony, but it requires that each committee involved in the payment acts honestly, i.e., that at most $f$ out of its $3f{+}1$ committee members are faulty.   We discuss how random peer sampling or a globally trusted committee can be used to obtain BFT committees with high probability.  
\aname{} relies on a total order broadcast protocol inside committees and on broadcasts between subsequent committees on a payment path.  Hence,  it is less efficient than \sname{}, which only relies on consistent broadcast.  
Given the classical FLP result~\cite{fischer1985impossibility}, \aname{} leverages existing total order broadcast protocols to provide termination if one assumes a partially synchrony~\cite{yin2019hotstuff, castro1999practical,decouchant2022damysus}, or probabilistic termination in full asynchrony~\cite{mostefaoui2015signature,kokoris2020asynchronous}. 
We prove the security of our protocols and provide a performance analysis.  While we assume honest and malicious parties during our analysis, our protocols can largely leverage incentives to ensure that rational parties behave honestly~\cite{avarikioti2019brick}.  

Our contributions are as follows:
\begin{itemize}
\item We prove that a secure multi-hop payment protocol with asynchronous communication and faulty channel updates is impossible. 
\item We design \sname{} and \aname{} for payment channel networks with blockchain asynchrony and synchronous or partially-synchronous communication, respectively.
\item We show that \sname{} and \aname{} enable secure multi-hop payments.
\item We derive communication and latency complexities for \sname{} and \aname{} and compare them with the existing protocols.
\end{itemize}
Our designs are the first payment channel networks that can be deployed without blockchain synchrony or trusted hardware.

The rest of this paper is organized as follows. 
\S\ref{sec:sota} provides the background knowledge.
\S\ref{sec:impossibility} proves that it is impossible to devise a protocol for asynchronous multi-hop payment channels without a trusted third party. 
\S\ref{sec:sync} and \S\ref{sec:async} respectively presents \sname{} and \aname{}.
\S\ref{sec:discussion} discusses incentives, the performance of \sname{} and \aname{} and compares them with existing payment channel networks. Finally, \S\ref{sec:conclusion} concludes this paper.

\section{Background}
\label{sec:sota}

This section introduces key concepts related to synchrony, committees, and communication.  
Furthermore,  we provide more details on payment channels, including a discussion of Brick, an asynchronous payment channel, and PCNs. 
We also formally define the security goals of multi-hop payments. 

\subsection{Time and Blockchain}

Nodes communicate by exchanging messages and are equipped with loosely synchronized clocks.  
Communication is called synchronous if there is a known upper bound $\delta$ on message transmission time, or asynchronous if there is no such upper bound. 
Messages are not lost but can be reordered. 

A blockchain is a distributed ledger of transactions that are grouped in blocks.
Time can be measured as a number of blocks appended to a blockchain. 
A secure blockchain should satisfy \textit{persistence} and \textit{liveness} properties~\cite{garay2015bitcoin,DBLP:conf/eurocrypt/PassSS17}.
A transaction is stable if it is unfeasible to remove it from the blockchain.
Persistence means that if a transaction is recorded as stable by an honest party, the rest of the honest parties do not record a conflicting transaction as stable.
Liveness implies that if an honest party wants to add a valid transaction into the blockchain, it will eventually be recorded as stable by all honest parties. 
Moreover, blockchain synchrony implies that a transaction is recorded as stable within $\Delta$ blocks.

\subsection{Byzantine Committees and Broadcast}

A committee is a set of nodes, each equipped with known signature keys, that jointly take or record decisions. They provide signed messages to show their agreement to a decision.  Nodes in the committee can be \emph{honest} (or \emph{correct}), which means that they strictly follow the protocol, or \emph{malicious} (or \emph{faulty}), which means that they might arbitrarily deviate from it. To tolerate at most $f$ faulty members, a Byzantine fault tolerant (BFT) committee contains at least $3f{+}1$ members.  
A committee that contains more than $f$ malicious nodes is said to be malicious.
A Byzantine quorum is a set of at least $2f{+}1$ replicas. A set of $2f{+}1$ signatures 
on a message is called a quorum certificate.
We rely on the following quorum properties. First, if nodes require a quorum certificate to update a value they store, then it is not possible to collect a quorum certificate 
on a different value. 
Second, when at least $f{+}1$ committee members declare that they have collected a quorum certificate for a given value, then the value returned is the one stored by correct committee members.  

To obtain guarantees regarding the messages delivered by correct committee members, one can leverage consistent, reliable, or total order broadcast.   
Informally, a \emph{consistent broadcast} used by an application guarantees that if a correct node broadcasts a message $m$, then all correct nodes eventually deliver $m$ to their application and that if two correct nodes deliver $m$ and $m'$, then $m{=}m'$~\cite{reiter1994secure}. 

\emph{Reliable broadcast}~\cite{bracha1985asynchronous,kozhaya2018rt} in addition requires that if a correct node delivers $m$ then all correct nodes eventually deliver it, regardless of whether the sender is correct or not. A \emph{total order (or atomic) broadcast} protocol additionally guarantees that correct nodes deliver messages in the same order~\cite{chandra1996weakest, kozhaya2021pistis}. Total order can be provided by a consensus protocol~\cite{castro1999practical, yin2019hotstuff,yu2019repucoin}.   

\subsection{Payment Channels and Brick}
\label{sec:pc_brick}

We consider bidirectional channels~\cite{decker2015fast}, as implemented in Lightning~\cite{poon2016bitcoin}.   
Two parties $P$ and $Q$ open a payment channel by publishing a funding transaction on the blockchain, through which they deposit $c_P$ and $c_Q$, respectively, coins for the channel.  
$(c_P,c_Q)$ denotes the initial \emph{balances} of the channel and the total amount of coins deposited, $c_P{+}c_Q$, is the \emph{capacity} of the channel.  
The balance is adjusted after each transaction, e.g., if $P$ sends $x$ coins to $Q$, then the balances will be $(c_P{-}x,c_Q{+}x)$ as long as $c_P {\geq} x$.  
Transactions are conducted \emph{off-chain}, i.e.,  the two parties locally sign a new state of the balance.
Parties can publish the final state on the blockchain to withdraw their coins with the latest balance.  

Brick~\cite{avarikioti2019brick} allows parties to establish a single-hop asynchronous payment channel. Each channel uses a BFT committee. State updates are sent to the committee and acknowledged by it. If a party wants to unilaterally close the channel, the committee confirms the last valid state and hence foregoes the need to publish on the blockchain within a certain time limit. Consistency between honest committee members is provided by a consistent broadcast protocol.  
Committee members only observe hashed payment channel states for privacy.
Brick does not require blockchain synchrony and can tolerate an asynchronous network.  

\subsection{Payment Channel Networks}
\label{sec:pcn}

The set of all payment channels form a network where the nodes are the parties and the edges are channels. Opening a channel requires a blockchain transaction and is hence costly and time-consuming.  
Instead, two parties that do not share a direct channel may send a \emph{multi-hop payment} along a path of payment channels~\cite{DBLP:conf/sp/AumayrMEEFRHM21,DBLP:conf/sp/DziembowskiEFM19,DBLP:conf/eurocrypt/DziembowskiEFHH19,DBLP:conf/uss/AumayrMKM21,poon2016bitcoin,DBLP:conf/fc/0001BBKM19,DBLP:conf/ndss/MalavoltaMSKM19} so that each party on the path pays their successor.

Let $\mhp$ be a multi-hop payment on path $path=(P_0,{\ldots},P_k)$ with corresponding channels $(\channel_0,{\ldots},\channel_{k-1})$.
Here, $path[0]\define P_0$ and $path[k] \define P_k$ denote the sender and the receiver respectively, and $path[1:k-1]$ are the intermediarty parties.
In \textit{source routing}-based multi-hop payment protocols, the sender $P_0$ chooses the path of the payment, and prepares a payload message $M_i$ for each party $P_i$. 
The payload message includes the necessary information regarding the payment, such as the amount, locking condition, the next party in the path, etc.

Initially, the state of each channel $\channel_i$ is $\unlocked$\footnote{In practice, there would be several simultaneous multi-hop payments, and these states are matched with the id of a multi-hop payment.
For simplicity, we ignore the corresponding ids, and focus on one multi-hop payment.}.
In the commit phase, each party $P_i$ locks the coins $v_i$ in $\channel_i$ (if they have enough balance and $f_i {=} v_{i-1} {-} v_{i}$, where $f_i$ is the fee of the intermediary $P_i$) with regard to a condition $\cond_i$, and then the channel state becomes $\locked$.  
The locked coins cannot be used by either party until the payment is finalized.  
In the payment execution phase, there are two possible outcomes for each locked channel: if the condition is satisfied and the payment is successful, then the final channel state becomes $\paid$; otherwise, the payment is revoked (canceled), and the final channel state is $\revoked$.

\subsection{Security and Privacy Definitions}
\label{sec:securepcn}

In general, the security of a multi-hop payment protocol is defined via \textit{balance security}, which often implicitly covers the \textit{correctness} and \textit{coin availability} properties as well.
Here, we define them separately.
Balance security implies that the total balance of an honest intermediary party does not decrease, and the sender should pay only if the receiver is paid~\cite{malavolta2017concurrency,DBLP:conf/uss/AumayrMKM21,eckey2020splitting}.
Correctness means that the payment is successful if all parties are honest and there is sufficient balance.  
The balance security definition covers \textit{safety} of the coins.
However, it does not take into account \textit{liveness}, i.e., the eventual transition from the initial to the final state.
Coin availability ensures that the payment is eventually finalized.

\begin{definition}[Balance Security]\label{def:bal}
	Let $\mhp$ be a multi-hop payment with $(v_0,{\ldots},v_{k-1})$ coins to be paid in channels of $(\channel_0,{\ldots},\channel_{k-1})$ such that $v_i{\geq} v_j$ for $0 {\leq} i {\leq} j {\leq} k{-}1$.
	If an honest intermediary party $P_i$ pays $v_{i+1}$ coins in $\channel_{i+1}$, $P_i$ receives the amount $v_{i}$ in $\channel_{i}$. If the sender $P_0$ pays $v_0$ coins in $\channel_{0}$, the receiver $P_k$ is paid $v_{k-1}$ coins in $\channel_{k-1}$. This implies that the final states of the channels $\channel_{i}$ and $\channel_{i+1}$ of an honest intermediary $P_i$ are identical, i.e., either both $\revoked$ or $\paid$.
\end{definition}

\begin{definition}[Correctness]\label{def:correct}
	Let $\mhp$ be a multi-hop payment with $(v_0,{\ldots},v_{k-1})$ coins to be paid in channels of $(\channel_0,{\ldots},\channel_{k-1})$ such that $v_i{\geq} v_j$ for $0 {\leq} i {\leq} j {\leq} k{-}1$.
	If the protocol is executed honestly and each channel $ \channel_{i}$ has a balance of at least $v_i$ and the locking conditions are satisfied, then the states of the channels of $\mhp.\mpath$  move to $\locked$ and then to $\paid$, otherwise, they remain $\unlocked$.
\end{definition}

\begin{definition}[Coin Availability]\label{def:availability}
	The channel of an honest party never stays forever in state $\locked$, i.e., it eventually transitions to $\paid$ or $\revoked$.
\end{definition}

A secure multi-hop payment protocol satisfies these three security notions. 
Instead of \textit{balance security}, \textit{atomicity} can be the main security goal. The difference between the two properties is highlighted by the \textit{wormhole attack}~\cite{DBLP:conf/ndss/MalavoltaMSKM19}, 
which is undesired, yet does not violate balance security.  
In this attack, the adversary steals the fees of other intermediaries by skipping them during a phase of the protocol. 
The attack violates atomicity,  which implies that if a channel revokes the payment all channels that precede this channel in the payment path also revoke, but not balance security.
We define atomicity by adding an additional requirement to the definition of balance security.

\begin{definition}[Atomicity]\label{def:atom}
	A multi-hop payment protocol satisfies atomicity if it satisfies balance security (Def.~\ref{def:bal}) and the following requirement: 
	For an honest sender $P_0$, if the condition $\cond_i$ for payment of channel $\channel_{i}$ is not satisfied for an honest party $P_{i+1}$, and the payment is $\revoked$, then there are no two channels $\channel_{\alpha}$ and $\channel_{\beta}$ such that the corresponding conditions $\cond_{\alpha}$ and $\cond_{\beta}$ are satisfied, and both channels are $\paid$ where $\alpha<i<\beta$.
\end{definition}

Regarding privacy, we consider the privacy concepts introduced in~\cite{moreno2017silentwhispers,malavolta2017concurrency}: 
\textit{Value privacy}, \textit{endpoint (sender or receiver) privacy} and \textit{relationship anonymity}. 
Value privacy means that the value of a payment is only revealed to the parties involved, i.e., sender, receiver, and intermediaries. 
Endpoint privacy states that the sender and receiver are not \emph{explicitly} revealed to any party that is not an endpoint.     
Last, relationship anonymity is a stronger notion of endpoint privacy stating that if there is an honest party among the intermediaries, then the corrupted intermediary parties cannot distinguish two payments sharing a segment of the path.  

We define the following games, which are formal variants of the ones defined in~\cite{malavolta2017concurrency,DBLP:conf/uss/AumayrMKM21}. 
In each game, an adversary tries to distinguish two different payments.
Unless stated otherwise, we assume that the payment values, fees, timelocks (if any) and path lengths of the two payments are the same (or they would be trivially distinguishable). 
Like in previous works, we do not consider side channel attacks~\cite{kumble2021lightning,nisslmueller2020toward}.
Also, the privacy properties are defined for the off-chain protocol and are not required to hold if the payment goes to the blockchain.

\noindent\textbf{Value Privacy Game}: Let an adversary $\adv \notin path$ choose two payment values $v^{0}$ and $v^{1}$ for a payment path $path$ where the channels in $path$ have sufficient capacities for both values.
Let $b\in\{0,1\}$ be chosen randomly. Let $\mhp^{b}$ be the corresponding multi-hop payment with payment value $v^{b}$.
In case of a successful payment of $\mhp^{b}$, $\adv$ wins the game by guessing the value of $b$:
\begin{eqnarray*}\label{eqn:v_priv}
Pr_{VP} \define Pr \left[  b'=b :  b' \leftarrow \adv^{v^{0},v^{1},path}, b \stackrel{R}{\leftarrow} \{0,1\} \right].
\end{eqnarray*} 

\begin{definition}[Value Privacy]\label{def:v_priv}
	We say that a multi-hop payment protocol satisfies value privacy if for every PPT (probabilistic polynomial-time) adversary $\adv$, the probability of winning Value Privacy Game is $Pr_{VP}=1/2+\epsilon$ where $\epsilon$ is negligible.
\end{definition}

\noindent\textbf{Endpoint Privacy Game}: Let an adversary $\adv \define P_i$ choose two payment paths $path^{0}$ and $path^{1}$ such that $(path^{0}[0],path^{0}[k]) \allowbreak \neq (path^{1}[0],path^{1}[k]) $ and $(P_{i-1},P_i,P_{i+1})=path^{0}[i-1:i+1]=path^{1}[i-1:i+1]$. 
Let $b\in\{0,1\}$ be chosen randomly. Let $\mhp^{b}$ be the corresponding multi-hop payment with path $path^{b}$, and $M^{b}_i$ be the payload message of $P_i$.
In successful payment of $\mhp^{b}$, $\adv$ wins the game by guessing the value of $b$:
\begin{eqnarray*}\label{eqn:ep_priv}
Pr_{EP} \define Pr \left[  b'=b :  b' \leftarrow \adv^{path^{0},path^{1},M^{b}_i}, b \stackrel{R}{\leftarrow} \{0,1\} \right].
\end{eqnarray*} 

\begin{definition}[Endpoint Privacy]\label{def:ep_priv}
	We say that a multi-hop payment protocol satisfies endpoint privacy if for every PPT adversary $\adv$, the probability of winning Endpoint Privacy Game is $Pr_{EP}=1/2+\epsilon$ where $\epsilon$ is negligible.
\end{definition}
 
\noindent\textbf{Relationship Anonymity Game}: Let an adversary $\adv \subset path[1:k-1]$ choose a path segment $path[1:k-1]$ such that there is at least one honest party in $path[1:k-1]$, and has two candidate senders $s^{0}$,$s^{1}$ and receivers $r^{0}$,$r^{1}$.
Let $b\in\{0,1\}$ be chosen randomly. If $b=0$, then $(path^{i}[0],path^{i}[k])= (s^{i},r^{i})$, otherwise $(path^{i}[0],path^{i}[k])= (s^{i},r^{1-i})$ for $i=0,1$.
Let $\mhp^{i}$ be the corresponding multi-hop payment with path $path^{i}$, and $\{M^{i}\}_\adv$ be the payload message(s) of $\adv$ for $i=0,1$.
In case of simultaneous successful payments of $\mhp^{0}$ and $\mhp^{1}$, $\adv$ wins the game by guessing the value of $b$:
\begin{eqnarray*}\label{eqn:r_anony}
Pr_{RA} \define Pr \left[  b'=b :  b' \leftarrow \adv^{path[1:k-1],\{M^{0}\}_\adv,\{M^{1}\}_\adv}, b \stackrel{R}{\leftarrow} \{0,1\} \right].
\end{eqnarray*}

\begin{definition}[Relationship Anonymity]\label{def:r_anony}
	We say that a multi-hop payment protocol satisfies relationship anonymity if for every PPT adversary $\adv$, the probability of winning the Relationship Anonymity Game is $Pr_{RA}=1/2+\epsilon$ where $\epsilon$ is negligible.
\end{definition}

\section{Impossibility Proof for Multi-hop Payments}
\label{sec:impossibility}

We show that there is no secure multi-hop payment protocol on an asynchronous network with faulty channels. 
Here,  we refer to a channel as faulty if parties or committee members cause channel updates that are not in line with the protocol or refuse channel updates that are in line with the protocol.  We call a channel honest
if all updates happen in accordance with the protocol.  An honest channel requires that some but not all parties involved in the channel (parties or committees) behave in accordance with the protocol.  
Parties (or committees) follow the protocol either because they are honest or because they are rational and incentivized. 
For now,  we focus on honest parties and discuss incentives in Section~\ref{sec:discussion}.

\subsection{Fair Exchange}

A fair exchange protocol between parties $P$ and  $Q$ is defined as follows~\cite{AsokanSW98}.
Assume that $P$ has an item $i_P$ with the description $\descOfX_P$ and $Q$ has an item $\itemOfX_Q$ with the description $\descOfX_Q$.
At the end of the protocol, both parties should hold the item of the other party.

A fair exchange protocol has two phases: \textit{initialization} and \textit{claim-and-fund}.
During the initialization, parties agree on the description of the items, resulting in $P$ having $(\itemOfX_P,\descOfX_Q)$ and $Q$ having $(\itemOfX_Q,\descOfX_P)$.
In the claim-and-fund phase, $P$ obtains $\itemOfX_Q$ (wrt. $\descOfX_Q$) and $Q$ obtains $\itemOfX_P$ (wrt. $\descOfX_P$).

A fair exchange protocol should satisfy \textit{effectiveness}, \textit{fairness}, and \textit{timeliness}~\cite{AsokanSW98,pagnia1999impossibility}.
\begin{itemize}
	\item \textit{Effectiveness}: If no party misbehaves and the items match their descriptions, then $Q$ obtains $\itemOfX_P$ and $P$ obtains $\itemOfX_Q$; otherwise both parties abort.
	\item \textit{Fairness}: If honest $P$ does not obtain $\itemOfX_Q$, then $Q$ should not obtain $\itemOfX_P$, and vice versa.
	\item \textit{Timeliness}: Every honest party eventually terminates via either obtaining the correspoding item or aborting.
\end{itemize}

\subsection{The proof}

We explain the payment properties over the channels independently of their actual implementation.
The following lemma reduces a secure multi-hop payment to a fair exchange protocol.
The proof builds on previous work that showed that there exists no asynchronous cross-chain communication protocol with faulty nodes~\cite{zamyatin2019sok}.

\begin{lemma}\label{lem:red}
Assume $\prod_{SMHP}$ is a protocol that solves secure multi-hop payment. Then there exists a protocol $\prod_{FE}$ that solves fair exchange.
\end{lemma}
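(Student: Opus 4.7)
My plan is to construct $\prod_{FE}$ that uses $\prod_{SMHP}$ as a subroutine, and then show that each fair-exchange property reduces to a security property of $\prod_{SMHP}$. The central observation is that the balance security and atomicity of an SMHP enforce an ``all-or-nothing'' conclusion for the channels along a payment path: if any honest intermediary's channel reaches $\paid$, its neighbour must as well. Encoding the two items as witnesses for the unlocking conditions lets me turn this all-or-nothing property into the symmetric exchange guarantee required by fair exchange.

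Concretely, I would instantiate an SMHP instance on a path that contains both $P$ and $Q$ --- for example, a path $A \to P \to Q \to B$ where $A$ and $B$ are auxiliary endpoints (possibly dummies controlled by $P$ and $Q$). The descriptions $\descOfX_P$ and $\descOfX_Q$ are used to define the conditions $\cond_i$ on the channels adjacent to $P$ and $Q$, so that unlocking a channel is equivalent to exhibiting the corresponding item. The initialization of $\prod_{FE}$ invokes the commit phase of $\prod_{SMHP}$: parties agree on descriptions and the channels transition to $\locked$. The claim-and-fund phase of $\prod_{FE}$ invokes the execution phase of $\prod_{SMHP}$: witnesses are revealed to satisfy the conditions, each channel transitions to $\paid$ or $\revoked$, and by construction each transition to $\paid$ exposes the associated item to the other party.

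With this mapping in place, the three fair-exchange properties follow from properties of $\prod_{SMHP}$. Effectiveness follows from correctness (Def.~\ref{def:correct}), since an honest run with valid items satisfies all conditions and drives every channel to $\paid$. Timeliness follows from coin availability (Def.~\ref{def:availability}), since no honest channel remains in $\locked$ forever. Fairness follows from balance security (Def.~\ref{def:bal}) together with atomicity (Def.~\ref{def:atom}): the final states of an honest intermediary's two adjacent channels must agree, so revealing one item (via the unlocking of its channel) forces the other item to be revealed, and aborting one side forces the other to abort.

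The main obstacle I anticipate is the directional asymmetry of a standard multi-hop payment, in which a single witness typically propagates from receiver to sender, whereas fair exchange requires a simultaneous, symmetric release of two items. I would overcome this either by (i) running two SMHP instances in opposite directions whose conditions are derived from a shared secret so that the atomicity of both yields the atomicity of the exchange, or by (ii) exploiting a single path in which $P$ and $Q$ both act as honest intermediaries, choosing the conditions on the two sides of each to depend on distinct items, so that balance security at each intermediary binds the fates of both items simultaneously. In either variant, I must also rule out premature disclosure of items during the commit phase, which I would handle by cryptographically committing to the items in the initialization and opening the commitments only when the corresponding condition is satisfied. The final step is to verify that the constructed $\prod_{FE}$ inherits exactly the network and failure model of $\prod_{SMHP}$ (asynchronous communication, possibly faulty channels), so that the known impossibility of fair exchange in this regime transfers back as the impossibility for $\prod_{SMHP}$.
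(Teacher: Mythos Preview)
Your proposal is correct and follows essentially the same route as the paper: encode the exchange via two consecutive channels of a multi-hop payment, with $P$ and $Q$ sitting at the shared intermediary position, and map effectiveness, fairness, and timeliness to correctness, balance security, and coin availability, respectively.

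The one substantive difference is the level of abstraction. The paper does not encode items as witnesses; it simply \emph{stipulates} that $\channel_i$ reaching $\paid$ ``assigns ownership of $\itemOfX_P$ to $Q$'' and $\channel_{i+1}$ reaching $\paid$ assigns $\itemOfX_Q$ to $P$, with the descriptions absorbed into the conditions $\cond_i,\cond_{i+1}$. Because item transfer is identified with the state transition itself rather than with witness revelation, the directional-asymmetry issue you raise never arises, and your two workarounds (opposite-direction instances, or cryptographic commitments against premature disclosure) are unnecessary. Your appeal to atomicity for fairness is also superfluous: balance security alone (Def.~\ref{def:bal}) already forces the two channels of an honest intermediary into the same final state, which is exactly what fairness needs. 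So your construction works, but the paper's is leaner by staying one abstraction level higher.
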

\begin{proof}
We first describe a fair exchange protocol $\prod_{FE}$ that uses a secure multi-hop payment protocol $\prod_{SMHP}$.
To realize a fair exchange between parties $P$ and $Q$, we consider two consecutive channels $\channel_{i}$ and $\channel_{i+1}$ of the multi-hop payment.
Assume that successful execution of $\prod_{SMHP}$ in $\channel_{i}$ assigns ownership of $\itemOfX_P$ to $Q$, and similarly $\channel_{i+1}$ assigns ownership of $\itemOfX_Q$ to $P$.
The descriptions of the exchanges can be defined in the corresponding conditions $\cond_i$ and $\cond_{i+1}$ and the existence of sufficient channel balances.
The exchange occurs if the state of the channel becomes $\paid$ and fails if the state becomes $\revoked$ or stays $\unlocked$.
We show that the secure multi-hop payment properties are equivalent to the fair exchange properties.

Effectiveness in fair exchange implies that if both parties $P$ and $Q$ honestly follow the protocol, and items match with the descriptions, then the exchange succeeds, otherwise, the exchange should fail for both parties.
In a secure multi-hop payment, correctness implies that if both channels $\channel_{i}$ and $\channel_{i+1}$ are honest, the channels have enough balance and the payment conditions are satisfied, then they will both accept the payment and update the state of the multi-hop payment with $\paid$, otherwise, they stay $\unlocked$.
Thus, from our description of a fair exchange protocol based on a secure multi-hop payment above, effectiveness and correctness are equivalent. 

Fairness in a fair exchange means that if an honest party does not obtain the item, then the other party does not as well. 
In a secure multi-hop payment, balance security implies that if the payment is $\paid$ in $\channel_{i+1}$, it should also be $\paid$ in  $\channel_{i}$, and similarly if the payment is $\revoked$ in $\channel_{i+1}$, then it should also be $\revoked$ in $\channel_{i}$.
In other words, either both channels become $\paid$ (successful), or both of them are $\revoked$ (aborted), which is equivalent to the fair exchange fairness.

Timeliness in fair exchange states that party $P$ (equiv. for $Q$) should eventually make the exchange or abort.
Coin availability for $\channel_{i}$ (and for $\channel_{i+1}$) states that if the channel is $\locked$, then it will eventually reach states $\paid$ or $\revoked$.
Also, note that if $\channel_{i}$ is not locked, then it is already $\unlocked$.
In other words, the state of the channel eventually reaches one of the final states or always stays in $\unlocked$.
Thus, coin availability and timeliness are equivalent.
\end{proof}

\begin{theorem}
There is no asynchronous secure multi-hop payment protocol that tolerates faulty channels without a trusted third party.
\end{theorem}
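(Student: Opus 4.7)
The plan is to prove the theorem by contradiction using Lemma~\ref{lem:red} as the main leverage, combined with the classical impossibility result of Pagnia and G\"artner~\cite{pagnia1999impossibility} which states that a fair exchange protocol without a trusted third party is impossible in an asynchronous setting with faulty parties. The overall argument is a short reduction: if we had a secure multi-hop payment protocol meeting our assumptions, Lemma~\ref{lem:red} would give us a fair exchange protocol meeting the same assumptions, contradicting Pagnia--G\"artner.

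First I would suppose, for contradiction, that there exists a protocol $\prod_{SMHP}$ that realizes secure multi-hop payment (i.e., satisfies balance security, correctness, and coin availability as per Def.~\ref{def:bal}, Def.~\ref{def:correct}, and Def.~\ref{def:availability}) in an asynchronous network, tolerates faulty channels, and does not rely on a trusted third party. Then I would invoke Lemma~\ref{lem:red} to obtain a fair exchange protocol $\prod_{FE}$ between two parties $P$ and $Q$ built on top of $\prod_{SMHP}$, where the roles of $P$ and $Q$ correspond to the endpoints of two consecutive channels $\channel_i$ and $\channel_{i+1}$ as in the lemma's construction.

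Second, I would verify that the environment inherited by $\prod_{FE}$ matches the hypotheses of the Pagnia--G\"artner impossibility result. Specifically, (i) since the underlying network for $\prod_{SMHP}$ is asynchronous, the communication between $P$ and $Q$ (mediated by the two channels) is also asynchronous; (ii) faulty channels in $\prod_{SMHP}$ translate to a potentially misbehaving counterparty in $\prod_{FE}$, since a malicious party can cause its own channel to deviate from the protocol; and (iii) the absence of a trusted third party in $\prod_{SMHP}$ carries over to $\prod_{FE}$, because the construction in Lemma~\ref{lem:red} introduces no new trusted entity. The main subtle step here is ensuring that ``faulty channel'' on the multi-hop side really yields ``faulty party'' on the fair exchange side in a way that is strong enough for Pagnia--G\"artner to apply; I would argue this by noting that a party who controls (or colludes with) a channel's updates can induce arbitrary deviations exactly as a Byzantine participant in a two-party exchange can.

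Finally, I would conclude that $\prod_{FE}$ is an asynchronous fair exchange protocol without a trusted third party tolerating a faulty party, contradicting the Pagnia--G\"artner impossibility result. The hard part will be the second step: carefully matching the adversarial model of the multi-hop payment setting to the adversarial model assumed in the classical fair exchange impossibility, and explicitly ruling out the possibility that the committee-based structures or on-chain fallbacks implicit in PCN designs covertly play the role of a trusted third party. Once that mapping is made precise, the contradiction is immediate and the theorem follows.
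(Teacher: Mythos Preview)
Your proposal is correct and follows essentially the same approach as the paper: the paper's proof is a one-line appeal to Lemma~\ref{lem:red} together with the Pagnia--G\"artner impossibility result~\cite{pagnia1999impossibility}, and your write-up simply unpacks that same reduction-plus-contradiction in more detail.
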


\begin{proof}
The proof follows from the reduction given in Lemma~\ref{lem:red} and the impossibility proof given in~\cite{pagnia1999impossibility}.
\end{proof}

\section{\sname: Blockchain-Asynchronous Payments on Synchronous Networks}
\label{sec:sync}

This section describes \sname, our multi-hop payment protocol that assumes synchronous communication. 
We utilize channels with a committee construction, which have also been used in Brick~\cite{avarikioti2019brick}. 
Other than the synchronous communication assumption, key differences to Brick are the introduction of timestamps for each committee member and update as well as the closing procedure. 
We assume that transactions to the blockchain are eventually included but not necessarily within time $\Delta$.  

\subsection{Model and Overview}
 
\paragraph{Threat Model} 

We assume a static adversary $\mathcal{A}$ that corrupts parties and committee members at the beginning of the protocol.
There is no upper limit on the number of corrupted parties, but the adversary cannot corrupt more than $f$ out of $N {\geq} 3f+1$ committee members of a channel of an honest party.
In other words, for an honest party $P_i$,  we assume that both channels $\channel_{i-1}$ and $\channel_{i}$ have at most $f$ malicious committee members among $N{\geq} 3f{+}1$.
This assumption should hold even if $P_{i-1}$ and $P_{i+1}$ are malicious. 
An honest party does not have to trust the committees of other parties for balance security and coin availability to hold. 

\paragraph{Payment channels} 
Each payment channel $(P,Q)$ involves a committee $W(P,Q)$ in the channel procedures: opening, update and closure.  
To establish a committee, two parties first agree on its members. 
During the channel opening, the parties publish the funding transaction, which deposits the coins and registers the committee. 
The coins can be redeemed from the channel if both parties sign the channel state, or if one party signs the state together with a Byzantine quorum of the committee.

To update the channel state, $P$ first computes the hash of the new channel state together with a random value,  signs the hash and shares the state, random value and the signature with $Q$.
Then, after $Q$ has validated the state, $Q$ also signs the same hash value and shares the signature with $P$.
Then, both parties send the signatures and the hash to the committee members via consistent broadcast.
The reason for not sharing the (unobfuscated) channel state with the committee members is the privacy of the individual channel updates.
In this way, the committee only sees the initial and final state of the channel.
The committee members validate the signatures on the hashed state.  If both signatures are valid, they acknowledge it by sending their signatures of the shared hash and register the state together with the time they have received it to the parties.
Once a party has received $2f+1$ signatures (necessary for the state to be available from any $f+1$ nodes if a channel has to be closed), she knows that the update is accepted.
An honest party does not participate in new update requests while there is an ongoing update or the channel closure has been initiated.
Note that the registration time is necessary for our protocol because of the time conditions.
The channel update takes $4 \timebroadcast$: exchanging signatures and states between the two parties takes $2 \timebroadcast$, informing the committee and receiving their responses are another $2 \timebroadcast$.

Closing a channel is a bit more complex than in single-hop asynchronous payment channels.
If both parties agree on the channel state, they can close the channel by signing the latest state, as in Brick. 
Otherwise, the party wanting to close, say $P$,  initiates the closure by sharing the latest state and random value used in the hash
with $W(P,Q)$.   $W(P,Q)$ checks that it is indeed the latest state by computing the hash and checking that it is equal to their latest hash.  Then,  $W(P,Q)$ informs $Q$ of the closure initiation.
If there are ongoing multi-hop payments, the committee waits until the timeouts of the multi-hop payments before closing the channel.
Note that these timeouts are computed with respect to the registration time of (the hash of) the channel state corresponding to the payment\footnote{The registration time of each committee member can be different. For the security of our protocol, we only require that an update sent by an honest party is received by honest committee members within $\delta$ time.}.
Once all the channel outputs are resolved, then the committee signs the final channel state accordingly.
$P$ can close the channel by publishing their signature together with the quorum certificate of the committee.

\paragraph{Overview of the protocol} 
Figure~\ref{fig:3hops} presents a payment example that illustrates \sname for four parties. 
Let us generalize the example for  $k{+}1$ parties where $P_0$ is the payment sender, $P_k$ is the payment receiver, and parties $P_1,\ldots,P_{k-1}$ are the intermediaries. 
The payment goes through $k$ channels: $\channel_{0}, \channel_{1},\ldots, \channel_{k-1}$ where $\channel_{i}$ denotes the channels between $P_{i}$ and $P_{i+1}$.  
In a successful multi-hop payment, in each channel $\channel_{i}$, $P_i$ pays $v_i \define v + \sum_{j=i+1}^{k-1} f_j$ coins to $P_{i+1}$ where $v$ is the value of the payment agreed by the sender and receiver, and $f_j$ denotes the fee of an intermediary $P_j$.
Here, a mechanism is needed to ensure balance security, i.e., to guarantee that honest $P_i$  receives $v_{i+1}$ coins in $\channel_{i+1}$ if she pays $v_{i}$ coins  in $\channel_{i}$.
Lightning achieves balance security by using HTLCs (hash-time lock contracts). 
However, because of the privacy concerns as well as the \textit{wormhole attack} against HTLC,  we adopt the AMHL (anonymous multi-hop lock) protocol~\cite{DBLP:conf/ndss/MalavoltaMSKM19}.

Here, we briefly explain the concept of \textit{conditional payments}  that we adopt from AMHL.  
A conditional payment $\condpay$ between payer $P$ and payee $Q$ can be defined with a tuple of $(v,T,Y)$ where $v$ is the payment value, $T$ is the timelock~\footnote{In AMHL, the time unit is defined over the blocks. In our protocol, we use the global time that is available to parties and committee members.} and $Y$ is the locking condition.
The conditional payment works as follows: First, $P$ locks $v$ coins with respect to the timelock $T$ and condition $Y$ with regard to an additively homomorphic one-way function $\mathcal{H}$~\cite{DBLP:conf/ndss/MalavoltaMSKM19}. 
$Q$ then can claim the amount $v$ by providing a witness (preimage) $y$ satisfying the condition $Y$, i.e., $Y=\mathcal{H}(y)$.
If the secret is not provided before time $T$, $P$ re-claims the amount.

\begin{figure}[t]
	\centering
	\includegraphics[width=\columnwidth]{./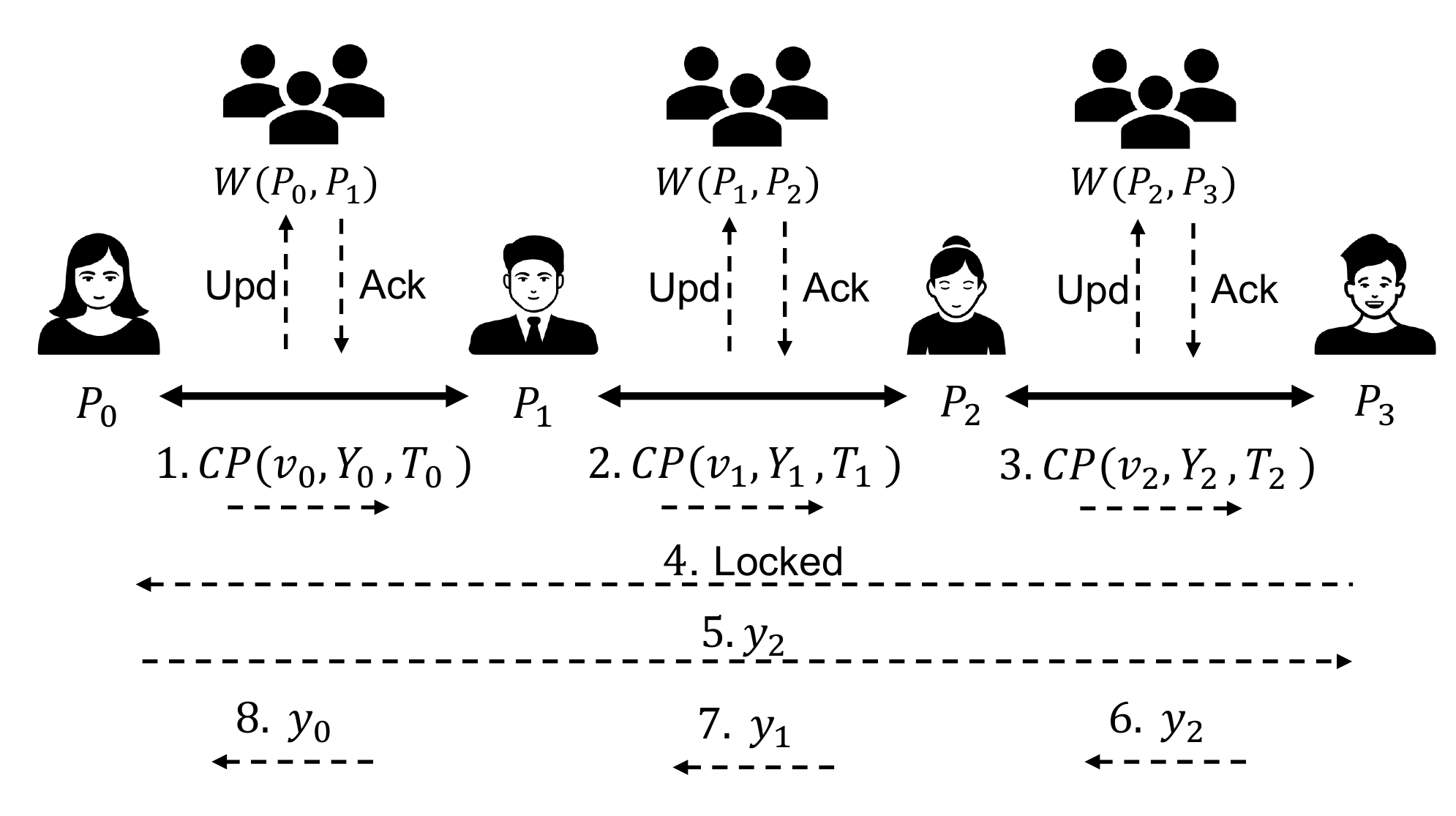}
	\caption{Illustration of the steps of \sname.}
	\label{fig:3hops}
\end{figure}

\subsection{Multi-hop Payment Protocol}

We assume that the sender and receiver agree on the payment value $v$ and the path is chosen by the sender.
The criteria on how to select the path are orthogonal to our protocol and any source routing algorithm can be used for selecting the path.
To improve privacy, we use onion routing with the Sphinx package format~\cite{danezis2009sphinx} for the communication between parties in the path.
The layered encryption allows intermediaries to learn the predecessor and successor on the path as well as a per-hop payload that includes payment information necessary such as the payment value or timeouts\footnote{As described in \url{https://github.com/lightningnetwork/lightning-rfc/blob/master/04-onion-routing.md}.}.  

Now, we explain our protocol in three steps: First, the sender creates the locking conditions for each channel in the path wrt. the AMHL protocol. 
Second, from sender to receiver, parties lock the coins if they accept the conditions and there is enough balance.
In the final step, the payment is executed: either the condition is satisfied and payment is completed or it is revoked.

\paragraph{Setup} First, the sender $P_0$ chooses random values  $\ell_0,\ell_1,\ldots,\ell_{k-1}$, and computes $y_j := \sum_{i=0}^{j} \ell_i$ and $Y_j := \mathcal{H}(y_j)$ for $j=0,\ldots,k-1$.
Then, the sender constructs the payload of $P_i$ as follows ($i\geq 1$):
\begin{eqnarray*}
	E_{pk_i} (M_i)  \define E_{pk_i}(P_{i+1}, v_i, (Y_{i-1},Y_{i},\ell_{i}), T_i, E_{pk_{i+1}}(M_{i+1}) )
\end{eqnarray*}
where $E_{pk_i}$ denotes the encryption with the public key of node $P_i$ and $M_{i+1}$ is the payload of node $P_{i+1}$, which is in the same structure as $M_i$. Note that there is always a padding to ensure that all messages have the same length, which we exclude for brevity.  
Here, $T_i = \sum_{j=i+1}^{k-1} TL_j$ is the timelock value and $v_i \define v + \sum_{j=i+1}^{k-1} f_j$ is the amount.
Like in the Lightning, we assume that timelock $TL_i$ and fee $f_i$ values of a channel $\channel_{i}$ as well as the public key of a party $pk_i$ are publicly known.
For the balance security of an honest party $P_i$,  the timelock value $TL_i$ should be greater than or equal to $6\timebroadcast$.
We explain the reasoning of the condition in the atomicity proof.

\paragraph{Locking}
The sender $P_0$ initiates the payment with $P_1$ wrt. the conditional payment $\condpay_{0} = (v_0,Y_0,T_0)$, see step 1 in Fig.~\ref{fig:3hops}.
Note that at this point $P_1$ is the payee, thus she accepts the payment if the channel balance is enough.
If the update is successful, then $P_0$ sends $E_{pk_1} (M_1)$ to $P_1$.
Then, $P_1$ decrypts $M_1$  and checks the payment conditions for the next hop.
After checking the validity of the payment, as described below, $P_1$ initiates the payment with $P_2$ in $\channel_{1}$ (step 2 in Fig.~\ref{fig:3hops}).
If the update is successful, $P_1$ sends $E_{pk_2} (M_2)$ to $P_2$.
The steps of decrypting the message, checking the validity of the payment, and forwarding it to the successor continue until the receiver has been reached.  
If all channels on the path, including the channel of the receiver, agree to lock the payment, then the locking phase is considered successful.
However, if a channel on the path does not accept the payment request, then the locking phase is aborted.  
All the previously locked channels can be unlocked either by agreement of the parties to abort the payment or after the corresponding timelocks expire.

Now, we discuss the channel update requirements for a paying intermediary party $P_{i}$ after updating the channel $\channel_{i-1}$ wrt. $\condpay_{i-1} {=} (v_{i-1},Y_{i-1},T_{i-1})$.
First, $P_{i}$ decrypts $M_{i}$ and obtains $(P_{i+1}, v_i,\allowbreak (Y_{i-1},Y_{i},\ell_{i}), T_i, E_{pk_{i+1}}(M_{i+1})  )$ where $\condpay_{i} {=} ((Y_{i-1},Y_{i},\ell_{i}),T_i)$.
Party $P_i$ accepts to lock the payment $\condpay_{i} {=} (v_{i},Y_{i},T_{i})$ in $\channel_{i+1}$ if:
\begin{itemize}
	\item $P_i$ has enough balance in the channel, at least $v_i$ coins.
	\item The payment value difference between $\condpay_{i-1}$ and $\condpay_{i}$ is at least the expected fee, i.e., $v_{i-1} -v_{i} \geq f_i $.
	\item The timelock difference between $\condpay_{i-1}$ and $\condpay_{i}$ is at least the expected timelock value, i.e., $T_{i-1} -T_{i}\geq TL_i$.
	\item The AMHL condition is correct, i.e., $H(\ell_i) \oplus Y_{i-1} = \mathcal{H}(y_i) = Y_i$, where $\oplus$ denotes the homomorphic operation in the range of $\mathcal{H}$.
\end{itemize}
If all the checks are successful and both parties agree to the update, $P_i$ stores $\ell_{i}$ to be utilized in the payment phase and the channel $\channel_{i-1}$ is updated by the addition of $\condpay_{i} $.
These checks are crucial for the balance security of the honest $P_{i}$.
The timelock check ensures that $P_{i}$ has enough time to react to the protocol regardless of the actions of the other parties; the AMHL check ensures that $P_{i}$ is paid in $\channel_{i-1}$ if she pays in $\channel_{i}$. 
For the receiver $P_k$, since there is no further channel,
Party $P_k$ only checks the timelock in $\condpay_{k-1} \define (v, T_{k-1}, Y_{k-1})$ is greater than or equal to the expected timelock value, i.e., $T_{k-1} \geq TL_k$.
Once the last channel is updated, the locking phase is successful.

\paragraph{Payment}
In this phase, the payment is executed. After the receiver $P_k$ sends $\mathsf{Locked}$ message to the sender $P_0$ (step 4 in Fig.~\ref{fig:3hops}),
 $P_0$ reveals the witness $y_{k-1}$ to $P_k$ (step 5 in Fig.~\ref{fig:3hops}).
The receiver $P_k$ uses the witness $y_{k-1}$ to claim the payment of $\condpay_{k-1}$ from $P_{k-1}$ in channel $\channel_{k-1}$ (step 6 in Fig.~\ref{fig:3hops}). 
If the witness is correct, then $P_{k-1}$ can obtain witness $y_{k-2}$ by using $y_{k-1}$ and already known value $\ell_{k-1}$, i.e., $y_{k-2}=y_{k-1}-\ell_{k-1}$.
Similarly, $P_{k-1}$ can claim the payment of $\condpay_{k-2}$ in channel $\channel_{k-2}$ (step 7 in Fig.~\ref{fig:3hops}). 
The same steps are repeated and the channels are updated in the direction from the receiver to the sender.
If the witness $y_i$ is not sent to $P_i$ before the timeout $T_i$, then the conditional payment $\condpay_{i}$ expires and the parties update their channel $\channel_{i}$ such that $P_{i}$ reclaims the locked amount $v_i$.

In general, upon receiving the witness $y_i$,  an intermediary $P_{i}$,  checks the following conditions:
\begin{itemize}
	\item The witness is correct wrt. the condition, i.e., $\mathcal{H}(y_i) \stackrel{?}{=} Y_i$.
	\item The secret is shared before the timeout, i.e.,  for the time $t_i$ when receiving the secret, $t_i < T_{i}$.
\end{itemize}
If both checks are satisfied, then the parties update their channel $\channel_{i}$ where $P_{i}$ pays $P_{i+1}$ the amount $v_i$.
After receiving the valid witness $y_i$, $P_i$ computes $y_{i-1}{=}y_{i}-\ell_{i}$ and immediately requests the payment from $P_{i-1}$ in the channel $\channel_{i-1}$.

In case of a dispute when one of the parties does not accept to update the channel, the honest party starts the procedure of channel closing via the committee members.
Here, we describe both cases where either payer or payee is honest and the other party does not cooperate with the channel update.
If the payee $P_{i}$ is honest and has shared the witness $y_{i-1}$ on time but the other party $P_{i-1}$ does not update the channel, $P_{i}$  initiates the closure of the channel. 
The procedure works as follows:
\begin{enumerate}
	\item At time $T_{i-1} - \timebroadcast$, if the update is not completed, $P_{i}$ sends the latest state of the channel and the witness $y_{i-1}$ to the committee members $W(P_{i-1},P_i)$.
	\item $W(P_{i-1},P_i)$ validates the state by checking whether the hash of the state is the same as the latest hash (with the signatures of the nodes) they have received. If the check fails, they do not continue. Otherwise, they inform $P_{i-1}$ by sending a message stating the channel closure has been initiated.
	\item Then, $W(P_{i-1},P_i)$ checks if the conditional payment is satisfied, i.e., $y_{i-1}$ satisfies the condition and the time of delivery is before the timelock $T_{i-1}$. If the checks hold, then they approve the payment, and sign the latest state, which includes the payment of $v_{i-1}$, and send their signatures to $P_{i}$. They send $y_{i-1}$ to $P_{i-1}$.
	\item After receiving the signatures of  $W(P_{i-1},P_i)$, $P_i$ signs the new state of the channel and publishes it on the blockchain.
\end{enumerate}
If the payer $P_{i-1}$ is honest and has not received $y_{i-1}$ until the timeout and the other party $P_{i}$ does not update the channel to allow $P_i$ to reclaim her coins, then $P_{i-1}$  initiates the closure of the channel. 
The procedure works as follows:
\begin{enumerate}
	\item At time $T_{i-1}$, if the update is not completed, $P_{i-1}$ sends the latest state of the channel to the committee members $W(P_{i-1},P_i)$.
	\item $W(P_{i-1},P_i)$ validates the state by checking the latest hash they received (as in the previous case). If the checks fail, they do not continue. Otherwise, they inform $P_{i}$.
	\item Then, $W(P_{i-1},P_i)$ checks if the conditional payment has failed, i.e., the timelock $T_{i-1}$ has passed. If the checks hold, then they approve the cancellation of the payment,  sign the latest state, which removes the payment of $v_{i-1}$, and send their signatures to $P_{i-1}$.
	\item After receiving the signatures of $W(P_{i-1},P_i)$, $P_{i-1}$ signs the new state of the channel and publishes it on the blockchain.
\end{enumerate}
Note that if $P_i$ is honest and $P_{i-1}$ has not accepted the channel update, $P_i$ should initiate the channel closure at  $T_{i-1} - \timebroadcast$ as explained in the first case.
Thus, the committee members do not need to validate that $P_{i-1}$ is honestly claiming that she did not receive the witness as they would already have received a message from $P_i$ if $P_{i-1}$ was misbehaving.  
Also, if there are multiple ongoing multi-hop payments, honest parties follow the same procedure for each of them (without sharing the channel state if it is already shared). 
In order to enable that state updates with timelocks about to expire can be sent to the committee despite on-going payments,  parties do not initiate new updates for the respective channel within $4\delta$ of a timelock expiring.  
They further ensure that there are at least $4\delta$ between timelocks for the same channel. 

\subsection{Security and Privacy Analysis}

We show in the following that \sname{} achieves atomicity (def.~\ref{def:atom}), coin availability (def.~\ref{def:availability}), value privacy (def.~\ref{def:v_priv}), endpoint privacy (def.~\ref{def:ep_priv}) and relationship anonymity (def.~\ref{def:r_anony}).
We omit the proof of balance security (def.~\ref{def:bal}),  as atomicity implies balance security.  
Furthermore,  as  it is straightforward that the protocol is successful if all parties follow the protocol and channels have enough balance, we omit the correctness proof.  

To create multi-hop payments and the corresponding payload messages for \sname{}, we utilize two cryptographic operations: hashing $\mathcal{H}(\cdot)$ and encryption $E_{pk}(\cdot)$.
Thus, the security analysis relies on the security of these primitives.
More specifically, we assume that  $\mathcal{H}$ is a cryptographically secure additively homomorphic one-way function and $E_{pk}$ is IND-CCA secure encryption scheme.
 
  \begin{theorem}
  \sname{} provides atomicity.
  \end{theorem}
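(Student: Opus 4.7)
The plan is to decompose atomicity into its two pieces—balance security (Def.~\ref{def:bal}) and the additional no-wormhole clause—and handle each separately. Balance security for an honest intermediary $P_i$ is essentially a timing argument that exploits the AMHL witness chain together with the $6\delta$ slack in each timelock, while the no-wormhole clause follows from the fact that the unknowability of $y_i$ propagates leftwards along the path.

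For balance security of an honest intermediary $P_i$, I would first observe that $\channel_i$ can become $\paid$ only by presentation of a valid witness $y_i$ with $\mathcal{H}(y_i) = Y_i$, either through a collaborative update with $P_{i+1}$ or through a forced closure at $W(P_i,P_{i+1})$ initiated by $P_{i+1}$. In the collaborative case $P_i$ learns $y_i$ no later than $T_i$; in the forced case the committee receives $y_i$ by $T_i$ and forwards it to $P_i$ within $\delta$, so $P_i$ holds $y_i$ by $T_i+\delta$ in the worst case. She then computes $y_{i-1}=y_i-\ell_i$ and starts a collaborative update on $\channel_{i-1}$, which takes at most $4\delta$. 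If $P_{i-1}$ refuses, she falls back to a forced closure via $W(P_{i-1},P_i)$ by time $T_{i-1}-\delta$, so that the witness reaches every honest committee member by $T_{i-1}$. The budget $T_{i-1}-T_i \geq TL_i \geq 6\delta$ precisely accommodates the worst-case sequence $\delta$ (committee forwarding) $+\,4\delta$ (update attempt) $+\,\delta$ (sending to committee before the deadline). The BFT assumption of at most $f$ faulty members out of $3f{+}1$ then yields a valid quorum certificate, so $\channel_{i-1}$ becomes $\paid$ and $P_i$ collects her due. The sender's side of balance security is similar: by construction $P_0$ releases $y_{k-1}$ only after receiving $\mathsf{Locked}$ from $P_k$, so in a synchronous network $P_k$ always has time to claim $\channel_{k-1}$ (with committee fallback if $P_{k-1}$ is uncooperative), meaning the receiver is paid whenever the sender is.

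For the wormhole-prevention clause, I would assume for contradiction the existence of indices $\alpha < i < \beta$ with both $\channel_\alpha$ and $\channel_\beta$ in state $\paid$, under the hypothesis that honest $P_{i+1}$ did not satisfy $\cond_i$ and $\channel_i$ is $\revoked$. The central claim is that no party other than $P_0$ and $P_{i+1}$ can compute $y_i$ by any time. The only two routes to $y_i$ are (i)~direct disclosure by $P_0$, which never occurs because the honest $P_0$ reveals only $y_{k-1}$, and (ii)~derivation via $y_{i+1}-\ell_{i+1}$. By IND-CCA security of $E_{pk_{i+1}}$ and honesty of $P_{i+1}$, the scalar $\ell_{i+1}$ stays secret; together with the additive homomorphic one-wayness of $\mathcal{H}$, this rules out any efficient computation of $y_i$. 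Since $P_{i+1}$ is honest and chose not to present $y_i$, it follows that no adversary has $y_i$ either. A downward induction then shows that no $y_\alpha$ for $\alpha<i$ is known to anyone outside $\{P_0,P_{i+1}\}$: each step $y_{j-1}=y_j-\ell_j$ requires a $y_j$ that is already unknown. Consequently no $\channel_\alpha$ with $\alpha<i$ can ever be $\paid$, since both the collaborative and the committee-enforced payment paths require exhibiting the witness, contradicting the supposed existence of $\alpha$.

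The hardest part will be the timing analysis: carefully tracking when $P_i$ first learns $y_i$ in each failure mode (including forced closure by $P_{i+1}$, where one must invoke the consistent broadcast guarantee inside $W(P_i,P_{i+1})$ to argue that $f{+}1$ honest members indeed forward the witness in time) and verifying that the $6\delta$ budget really suffices. A subtler point is the constraint that parties leave $4\delta$ between consecutive timelocks on the same channel and do not initiate fresh updates within $4\delta$ of an expiring one: this is what prevents a congestion attack from delaying the fallback closure past $T_{i-1}$. A secondary obstacle is making the step from ``$\ell_{i+1}$ is secret'' to ``$y_i$ is infeasible to compute'' fully formal via a standard reduction to the underlying AMHL security game rather than leaving it at the level of intuition.
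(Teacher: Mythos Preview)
Your proposal is essentially correct and follows the same three-part decomposition as the paper (intermediary balance security via the $6\delta$ timing budget, sender balance security, and the no-wormhole clause via one-wayness of $\mathcal{H}$ together with secrecy of $\ell_{i+1}$). The timing analysis you sketch is exactly the paper's, and your downward induction for the wormhole clause is a slightly more explicit version of the paper's argument.

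One point to tighten: your sender clause argues the wrong implication. You establish ``if honest $P_0$ releases $y_{k-1}$ then $P_k$ can claim $\channel_{k-1}$,'' but balance security requires ``if $\channel_0$ becomes $\paid$ then $\channel_{k-1}$ becomes $\paid$.'' The missing link is that $\channel_0$ becoming $\paid$ forces someone to exhibit $y_0$, and since the intermediaries only hold $\ell_1,\ldots,\ell_{k-1}$, one-wayness of $\mathcal{H}$ implies $y_0$ is computable only after $y_{k-1}$ has been released; then your argument applies. The paper makes exactly this backward inference (``$P_0$ only pays if $y_0=\ell_0$ is provided \ldots\ $y_0$ can only be obtained after $y_{k-1}$ is shared''), and you already have the tool for it in your wormhole paragraph, so this is easy to patch.
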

   \begin{proof}[Proof]
   	Our atomicity proof has three steps: (i) the balance security for an intermediary party $P_i$, (ii) balance security of the sender $P_0$, (iii) atomicity requirement.
   	
   	If the honest party $P_i$ does not pay in $\channel_{i}$, then it does not lose any coins.
   	Thus, we only need to ensure that if $P_i$ pays in $\channel_{i}$, the party receives the payment in $\channel_{i-1}$.
   	In other words, if the channel $\channel_{i}$ is updated with the acceptance of the payment, then there should be sufficient time for the party to use the corresponding witness on the channel $\channel_{i-1}$.
   	
   	If the payment for $\channel_{i}$ is accepted (by parties or the committee), the latest time that $P_{i}$ can receive the witness $y_i$ occurs when the counter party $P_{i+1}$  does not provide $y_i$  before time $T_{i} - \timebroadcast$, but rather shares it with the committee via initiating the channel closure.
   	To have approval of the honest members in $W(P_{i-1},P_i)$, they should receive $y_i$  before $T_{i-1}$. They then send it to $P_{i}$, which is delivered at latest at $T_{i}+\timebroadcast$.
   	After that, an honest $P_{i}$ can compute the witness $y_{i-1}=y_{i}-\ell_{i}$. Then, $P_i$ first attempts to update the channel  $\channel_{i}$ before $T_{i-1} - \timebroadcast$, which takes at most $4 \timebroadcast$ (otherwise $P_i$ sends the channel state and the witness to committee).
   	Thus, for an honest party $P_{i}$ to react on time, the following inequality has to hold: $ T_{i-1} - \timebroadcast - (T_{i}+\timebroadcast )\geq  4 \timebroadcast$.
   	This is satisfied because of the choice of timelocks, i.e., $T_{i-1} -T_{i}\geq TL_i \geq 6\timebroadcast$.  This shows (i). 
   	
   	Now, we show that $P_0$ pays in $\channel_{0}$ only if the receiver $P_k$ is paid in $\channel_{k-1}$.
   	Similar to~\cite{DBLP:conf/ndss/MalavoltaMSKM19}, this is satisfied because of the secure one-wayness property of $\mathcal{H}$.
   	Party $P_0$ only pays if $y_0 = \ell_0$ is provided by $P_1$. The intermediary parties only know the $\ell_i$ values for $i=1,\ldots,k-2$.
   	 The value $y_0$ can only be obtained after $y_{k-1}$ is shared by $P_0$, which happens after the channel  $\channel_{k-1}$ is locked.
   	 In that case, $P_{k}$ can also claim the payment in $\channel_{k-1}$ using $y_{k-1}$. Thus, (ii) holds. 
   	 
   	 Finally, we show atomicity: for an honest sender $P_0$ and honest intermediary $P_{i+1}$, if the condition $\cond_{i}$ is not satisfied, then both conditions $\cond_{\alpha}$ and $\cond_{\beta}$ should not be satisfied for $\alpha<i<\beta$. 
   	 As explained for (ii), in our protocol, the payment conditions are satisfied in the order from the receiver to the sender. 
   	 It is hence sufficient to show that $\cond_{\alpha}$ is not satisfied. 
   	 Note that satisfying the payment condition $\cond_{\alpha}$ implies knowing the preimage of $Y_{\alpha}$, which is $y_{\alpha}= \sum_{i=0}^{\alpha} \ell_i$.
   	 Since $\cond_{i}$ is not satisfied, party $P_{i+1}$ would not know/reveal $y_{i+1}$ or $y_{i}$, and also $\ell_{i+1}$.
   	 Thus the only way to know $y_{\alpha}$ is directly from $Y_{\alpha}$, which contradicts with one-wayness of the hash function $\mathcal{H}$.
  \end{proof}
  
  \begin{theorem}
  \sname{} provides coin availability.
  \end{theorem}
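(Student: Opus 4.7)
The plan is to fix an arbitrary honest party $P_i$ whose channel $\chan$ (either $\chan_{i-1}$ in her role as payee, or $\chan_i$ in her role as payer) is in state $\locked$ because of a conditional payment with timelock $T$, and to show that within a bounded amount of real time the state transitions to $\paid$ or $\revoked$. I would proceed by a case analysis on the direction of the payment and on whether the relevant counterparty cooperates off-chain.

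First I would dispatch the cooperative cases. If $P_i$ is the payee and the counterparty delivers a valid witness $y$ at some time $t < T - \timebroadcast$, the standard two-party update completes within $4\timebroadcast$ by the synchrony bound, leaving the channel in $\paid$. If $P_i$ is the payer and no valid witness arrives by $T$, an honest counterparty has no reason to keep coins locked and a cooperative revoke update again terminates within $4\timebroadcast$, leaving the channel in $\revoked$.

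The core of the argument is the dispute case, in which the counterparty refuses to sign the update. Here $P_i$ triggers closure via the committee $W$: as payee she does so at $T - \timebroadcast$ with the witness and the latest state, and as payer she does so at $T$ with the latest state. Because communication is synchronous with bound $\timebroadcast$ and at most $f$ out of the $3f{+}1$ committee members are faulty, all $2f{+}1$ honest members receive $P_i$'s request within $\timebroadcast$, recompute the hash and verify it against their latest registered hash, check whether the conditional payment succeeded or expired with respect to their own registration times, and return $2f{+}1$ matching signatures on the resulting closing state within an additional $2\timebroadcast$. Once $P_i$ has gathered such a quorum certificate, the channel has effectively left the $\locked$ state; the blockchain liveness assumption that valid transactions are eventually included then guarantees on-chain settlement, but this is not required for the off-chain transition that coin availability concerns.

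The hard part will be ruling out two races at the committee level. First, committee members measure ``on time'' with respect to their own registration times rather than a global clock, so I must show that the $\timebroadcast$ synchrony bound guarantees agreement among the honest majority on whether to sign a $\paid$ or $\revoked$ closing state for a given payment; in particular, an honest $P_i$ who contacts the committee at $T - \timebroadcast$ as payee must be viewed by every honest member as being within the timelock. Second, with several concurrent payments on the same channel, the closure procedure for one payment must not block the others, and here I would invoke the protocol-level rules that forbid starting new updates within $4\timebroadcast$ of an expiring timelock and that enforce at least $4\timebroadcast$ spacing between timelocks on the same channel, from which each locked payment can be argued to be resolved within a bounded time after its own timelock expires.
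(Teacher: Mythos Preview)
Your proposal is correct and follows the same skeleton as the paper's own proof (finite timelocks, cooperative off-chain update in the benign case, committee-initiated closure in the dispute case), though it is far more detailed than the paper's three-sentence argument, which does not unpack the registration-time race or the concurrent-payment issue you flag. One minor slip: in the cooperative payee case it is $P_i$ (the payee) who holds and delivers the witness to the payer, not the counterparty; the counterparty's role in that case is only to co-sign the resulting update.
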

  \begin{proof}[Proof]
  Each conditional payment has a finite timelock. 
  In the honest case, parties release the locked coins before the timelock expires.
  In the malicious case where one of the parties does not accept the update of the channel, the  honest party can initiate the closure of the channel.
  Then, the committee executes the closure of the channel after waiting for the timelocks, which also requires a finite amount of time.
  \end{proof}
  
  \begin{theorem}
\sname{} provides value privacy.
\end{theorem}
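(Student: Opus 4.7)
The plan is to show that no PPT adversary $\adv \notin path$ can distinguish a multi-hop payment of value $v^0$ from one of value $v^1$ with more than negligible advantage. The key observation is that the payment value appears in only two kinds of observable data: the onion-encrypted payloads $E_{pk_i}(M_i)$ forwarded between consecutive parties on the path, and the channel-state hashes $\mathcal{H}(\st \Vert r)$, computed with a freshly sampled nonce $r$, that the channel endpoints send to their committee members during an update. All other protocol messages --- the $\mathsf{Locked}$ notification, the witnesses $y_i$, and the committee signatures on the hashed state --- are value-independent in syntax, and Sphinx padding guarantees that payload ciphertexts have identical length for both challenge values.

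First, I would enumerate the messages visible to $\adv$ during an honest run of \sname{} on the adversary-chosen path and argue that each such message is either (i) an encryption under a public key $pk_i$ of some $P_i \in path$, whose secret key $\adv$ does not possess since $\adv \notin path$, (ii) a hash of a channel state concatenated with a fresh random nonce, or (iii) a deterministic function of items of type (i) or (ii) that introduces no new value-dependent information (notably the committee signatures on the hashed state).

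Second, I would proceed by a hybrid argument that rewrites the trace one channel at a time. Define hybrids $H_0, H_1, \ldots, H_k$ in which $H_j$ produces the payload ciphertext and the hashed state updates for channels $\channel_0, \ldots, \channel_{j-1}$ using value $v^1$ and the remaining ones using $v^0$, so that $H_0$ is the game with $b{=}0$ and $H_k$ the game with $b{=}1$. Two consecutive hybrids differ in a single onion ciphertext and a single hashed state update: the ciphertexts are indistinguishable by IND-CCA security of $E_{pk_{j+1}}$ via a standard reduction that forwards $\adv$'s decryption queries outside the challenge to the IND-CCA oracle, and the hashed states are indistinguishable because $\mathcal{H}$ is applied to a state concatenated with a fresh random nonce (hiding follows from the one-wayness assumption on $\mathcal{H}$, modelled as in the underlying Brick and AMHL arguments). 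Summing over the at most $k$ hops yields a total distinguishing advantage bounded by $k\cdot\mathrm{negl}$, which is itself negligible. Hence $Pr_{VP} \leq 1/2 + \epsilon$.

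The main obstacle I expect is cleanly handling the committee-side messages so that the signatures they emit do not constitute an independent value leak: since each correct committee member signs exactly the hashed state $\mathcal{H}(\st \Vert r)$ it receives, indistinguishability of the hashes transfers to indistinguishability of the resulting signatures through a straightforward simulator that reuses the hash coming from the previous hybrid step. Beyond that, one must rule out trivial side channels such as message length and timing: these are neutralised respectively by Sphinx padding and by the fact that the timelocks $T_i$ and the overall protocol timeline depend only on the path structure, which is identical for both challenge payments by definition of the Value Privacy Game.
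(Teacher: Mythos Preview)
Your core approach coincides with the paper's: both reduce value privacy to IND-CCA security of the onion encryption $E_{pk}$, arguing that an outsider $\adv\notin path$ sees only ciphertexts under keys it does not hold. The paper's proof is in fact considerably terser than yours---it mentions only the encrypted payloads and invokes IND-CCA directly, without a hybrid argument and without discussing the committee-side traffic at all.

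Your treatment is therefore more thorough in one respect: you explicitly account for the hashed channel states that parties broadcast to committee members, which the paper's proof silently ignores even though a corrupted committee member is a legitimate outsider adversary. Two small points are worth tightening, though. First, the function used to hash channel states is not the AMHL homomorphic one-way function $\mathcal{H}$; the paper treats the state hash as an ordinary hash-and-randomise commitment, so writing $\mathcal{H}(\st\Vert r)$ conflates two distinct primitives. Second, one-wayness alone does not give you hiding: a one-way function may leak individual bits of its input. To make your hybrid step for the hashed states go through you need the hash-with-nonce to be a computationally hiding commitment (e.g., model it as a random oracle, or assume hiding explicitly, as Brick implicitly does). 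With that assumption stated, your hybrid argument is a clean formalisation of what the paper asserts informally.
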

  \begin{proof}[Proof]
	We need to show that the adversary $\adv$ cannot win the Value Privacy Game, i.e., cannot guess which of the values $v^{0}$,$v^{1}$ has been used in the multi-hop payment. 
	In \sname{}, payload messages between the parties in the path are encrypted and shared with the onion routing mechanism.
	Thus, an outsider adversary can only obtain the encrypted payload messages. 
	Thereby, if $\adv$  is able to guess the payment value with more than $1/2$ probability, it implies that $\adv$ obtained useful information from the encrypted payload messages.
	However, this contradicts our assumption that the encryption scheme $E_{pk}$ is IND-CCA secure.
\end{proof}

  \begin{theorem}
	\sname{} provides endpoint privacy.
\end{theorem}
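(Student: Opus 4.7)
The plan is to show that the view of the adversary $\adv = P_i$ is computationally indistinguishable whether the underlying path is $path^{0}$ or $path^{1}$, so that $\adv$'s advantage in guessing $b$ reduces to breaking either the IND-CCA security of $E_{pk}$ or the one-wayness/pseudorandomness of $\mathcal{H}$. To do so, I would first enumerate every piece of information $\adv$ can see during the protocol execution and then argue, bit by bit, that each one is either identically distributed or computationally indistinguishable across the two worlds.

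Concretely, the view of $\adv$ during the locking phase consists of: (a) the conditional payment registered in $\channel_{i-1}$, namely $(v_{i-1}, Y_{i-1}, T_{i-1})$; (b) the decrypted payload $M_i = (P_{i+1}, v_i, (Y_{i-1}, Y_i, \ell_i), T_i, E_{pk_{i+1}}(M_{i+1}))$; and (c) the network traffic forwarded onward through $\channel_{i}$ to $P_{i+1}$. By the game's constraints, the values $v_{i-1}, v_i, T_{i-1}, T_i$ and the identities $P_{i-1}, P_{i+1}$ are fixed to be identical in both worlds, and Sphinx padding ensures the ciphertext lengths at hops outside $\adv$'s neighborhood match. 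The AMHL locks $(Y_{i-1}, Y_i, \ell_i)$ are generated by the sender as $Y_j = \mathcal{H}(\sum_{t\leq j}\ell_t)$ for uniformly random $\ell_t$; since $\adv$ does not know any $\ell_t$ for $t<i$, the joint distribution of $(Y_{i-1}, Y_i, \ell_i)$ depends only on the coins of a uniform distribution passed through $\mathcal{H}$ and is therefore identical in the two worlds.

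The remaining non-trivial component is the inner onion $E_{pk_{i+1}}(M_{i+1})$, which in world $b$ encrypts $M^{b}_{i+1}$, and which $\adv$ cannot decrypt since it does not hold $sk_{i+1}$. I would construct a standard hybrid/reduction: given an adversary with non-negligible advantage $\epsilon$ in the Endpoint Privacy Game, build an IND-CCA distinguisher $\mathcal{B}$ that embeds the challenge ciphertext of $E_{pk_{i+1}}$ as the inner onion and simulates everything else (including decryption queries for other ciphertexts via its oracle, and the honest sender's choice of $\ell_t$'s). Any distinguishing advantage translates directly into advantage against IND-CCA security of $E_{pk}$. In the payment phase, the only additional values $\adv$ learns are the witnesses $y_i$ and $y_{i-1}$, which, like the $Y_j$'s, are functions of the random $\ell_t$'s and so have the same distribution in both worlds.

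The main obstacle I anticipate is not the reduction itself but rigorously arguing that the outer onion layers and the channel-closing transcripts add nothing to $\adv$'s view. In particular, one has to verify that messages exchanged between the committees $W(P_{i-1},P_i)$ and $W(P_i,P_{i+1})$ during both the locking and payment phases, as well as any consistent-broadcast traffic seen by $P_i$, are either fully determined by already-indistinguishable values or are padded/encrypted ciphertexts whose distribution is covered by the Sphinx and IND-CCA reductions. Once these auxiliary transcripts are handled, the chain of hybrids collapses to a single invocation of IND-CCA security and yields $Pr_{EP} \leq 1/2 + \epsilon$ with $\epsilon$ negligible.
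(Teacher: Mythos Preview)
Your proposal is correct and follows essentially the same approach as the paper: decompose $\adv$'s view into the fixed components (neighbors, values, timelocks), the AMHL tuple $(Y_{i-1},Y_i,\ell_i)$, and the inner onion $E_{pk_{i+1}}(M_{i+1})$, then argue that the tuple is identically distributed because the $\ell_t$'s are uniform and path-independent, while the onion is handled by IND-CCA security. Your treatment is in fact more thorough than the paper's---you explicitly sketch the IND-CCA reduction, account for the payment-phase witnesses $y_i,y_{i-1}$, and flag the committee transcripts---but the underlying decomposition and the two key arguments are the same.
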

\begin{proof}[Proof]
	Here, we show that the adversary $\adv$ cannot win the Endpoint Privacy Game with significantly more than a  probability of $1/2$.
	First, $\adv \define P_i $ chooses two paths of the same length, payment value, and timelocks, and for both paths, the neighbors of $\adv$, $P_{i-1}$ and $P_{i+1}$ are the same.
	For a randomly chosen path, let $M_i$ be the payload received by $\adv$.
	The previous party in the path ($P_{i-1}$) is the same for both paths, thus, in any case, the payload will be received from $P_{i-1}$, and it will include $ (P_{i+1}, v_i, (Y_{i-1},Y_{i},\ell_{i}), \allowbreak T_i, \allowbreak E_{pk_{i+1}}(M_{i+1}))$.
	Since the next party in the path, the payment value, and timelocks are the same, the distinguishing part of the two potential payments would be $ ( (Y_{i-1},Y_{i},\ell_{i}), E_{pk_{i+1}}(M_{i+1}))$.
	The adversary cannot obtain any information from $ E_{pk_{i+1}}(M_{i+1})$ without violating the IND-CCA security assumption of the encryption scheme.
	Finally, we need to show that the tuple $(Y_{i-1},Y_{i},\ell_{i})$ does not reveal any information about the actual sender or receiver pair.
	Note that $\ell_{i}$ values are randomly chosen by the sender of the payment and the $Y_{i}$ values are computed accordingly. 
	Since $\ell_{i}$ values do not contain any information specific to the sender and are chosen randomly, from the perspective of $\adv$, the tuple does not give any information regarding the path.
	Hence, we can conclude that the payload does not give any information about the path other than the neighbors of the adversary. 
\end{proof}

  \begin{theorem}
	\sname{} provides relationship anonymity.
\end{theorem}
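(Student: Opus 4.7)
The plan is to build directly on the endpoint-privacy argument, exploiting that at least one party $P_h$ in the shared segment $path[1{:}k{-}1]$ is honest. I first observe that the adversary's view in the Relationship Anonymity Game consists of three components for each of the two simultaneous payments: the onion payloads $\{M_j\}_\adv$ received on the adversary-controlled hops, the pair $(Y_{j-1},Y_{j},\ell_{j})$ appearing inside each such payload, and the timing and witnesses exchanged on the adversary's channels during the payment phase. My goal is to show that the joint distribution of these views in the two payments is computationally identical for $b{=}0$ and $b{=}1$.

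I would proceed by a hybrid argument centered at the honest intermediary $P_h$. In the first hybrid, I replace the onion payload $E_{pk_h}(M_h)$ produced by the sender of each payment with an encryption of an independent dummy message of the same length; by IND-CCA security of $E_{pk}$ this change is indistinguishable to $\adv$ (who does not hold $sk_h$), and applying the swap to both payments costs at most twice the encryption advantage. After the hybrid, the adversary-controlled segment before $P_h$ and the one after $P_h$ in a given payment are no longer syntactically linked by any ciphertext visible to $\adv$; in particular, $\adv$'s payloads on the two sides of $P_h$ contain only the $\ell_j$'s that lie on its own hops plus the corresponding $Y_j$'s.

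Since the sender draws all $\ell_0,\ldots,\ell_{k-1}$ uniformly and independently at random, the tuples $(Y_{j-1},Y_{j},\ell_{j})$ on either side of $P_h$ are mutually independent, and their joint distribution depends only on the shared segment $path[1{:}k{-}1]$, which is identical in the $b{=}0$ and $b{=}1$ worlds. The timing components are also identical in distribution: the commit phase locks channels sequentially along the shared segment, and the payment phase releases witnesses sequentially, in both cases independently of who the sender or receiver is outside the shared segment, because the honest $P_h$ strictly follows the protocol and forwards messages at times governed only by the (identical) timelocks. A standard reduction then shows that any PPT $\adv$ winning with advantage $\epsilon$ yields a distinguisher against IND-CCA with advantage $\epsilon/2 - \mathsf{negl}$, contradicting the assumed security of $E_{pk}$.

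The main obstacle I expect is ruling out a correlation attack across the two simultaneous payments: $\adv$ could try to match the pair of witnesses it later receives on one payment with those of the other, hoping that the AMHL chain structure $y_{j-1} = y_j \ominus \ell_j$ leaks a relation between the senders and receivers. I would address this by emphasizing that the linking witness $y_{h-1}$, which is the only bridge between $\adv$'s views on the two sides of $P_h$, is a sum of $h$ independent uniform values including $\ell_h$ which is known only to the honest $P_h$; consequently, after marginalizing over $\ell_h$, the pre-$P_h$ and post-$P_h$ witness sequences for both payments are jointly uniform and independent, giving $\adv$ no leverage to pair sender $s^i$ with receiver $r^i$ versus $r^{1-i}$.
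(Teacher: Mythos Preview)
Your approach is correct and lands on the same pivot as the paper: the single honest intermediary $P_h$ and its private value $\ell_h$ are what decouple the sender-side and receiver-side views. The paper, however, argues this more directly rather than via an IND-CCA hybrid: it observes that linking the tuple seen at $P_{h-1}$ (tied to a sender) to the tuple seen at $P_{h+1}$ (tied to a receiver) amounts to deciding which of $Y^{s^0}_{h-1}\oplus Y^{r^0}_{h}$ and $Y^{s^0}_{h-1}\oplus Y^{r^1}_{h}$ equals $\mathcal{H}(\ell_h)$, and since $\ell_h$ is uniform and hidden both candidates are equiprobable. Your hybrid-then-independence route reaches the same conclusion with a cleaner reduction to IND-CCA, while the paper's formulation makes the combinatorial ``matching'' obstruction for the adversary more explicit; both rely on the same one-time-pad effect of $\ell_h$.

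One small slip to fix: you write that ``the linking witness $y_{h-1}$ \ldots\ is a sum of $h$ independent uniform values including $\ell_h$.'' In fact $y_{h-1}=\sum_{i=0}^{h-1}\ell_i$ does \emph{not} contain $\ell_h$; the bridge is that the adversary sees $y_{h-1}$ on the left and $y_h=y_{h-1}+\ell_h$ on the right, and marginalizing over the hidden $\ell_h$ makes $y_h$ uniform and independent of $y_{h-1}$. Similarly, your sentence ``since the sender draws all $\ell_0,\ldots,\ell_{k-1}$ uniformly and independently'' is not by itself why the two sides are independent---it is specifically the hidden $\ell_h$ acting as a one-time pad on $y_{h-1}$ that yields independence; you use this correctly for the witnesses but should invoke it just as explicitly for the $Y$-tuples in the locking phase.
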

\begin{proof}[Proof]
	Different from the endpoint case, in the Relationship Anonymity Game, the adversary is trying to distinguish $b$ from two successful payments.
	Initially, the adversary chooses a path segment $path[1:k-1]$ that is common in both payments.
	Moreover, the adversary can include multiple adversarial intermediary parties in $path[1:k-1]$, so long as there is at least one honest party. 
	
	Let $P_i$ be an honest party and the rest of the intermediaries be adversarial, i.e., $\adv \define path[1:i-1] \cup path[i+1:k-1]$. 
	Note that this is the best case for the adversary (since it includes all but one honest party in the path segment).
	Let $\{M^{0}\}_\adv,\{M^{1}\}_\adv$ be the corresponding payload messages received by $\adv$.
	Since the intermediaries, payment values, and timelocks are the same fro both paths, we can ignore them for distinguishing the paths.
	Thus, the only aspects that can be used for distinguishing the paths are the conditional payment tuples.
	
	The adversary has the tuples $(Y^{b}_{j-1},Y^{b}_{j},\ell^{b}_{j})$ for $b=0,1$ and $j=1,\ldots,i-1,i+1,\ldots,k-1$.
	Trivially,  due to $H$ being homomorphic,  the adversary can link the tuples from the sender to $path[i-1]$.
	Similarly,  $\adv$ can link the tuples from the receiver to  $path[i+1]$.
	Therefore, $\adv$ can win the game (with more than $1/2$ probability) if and only if $\adv$ can link the tuples of $path[i-1]$ and the tuples of $path[i+1]$, thereby connecting the sender and receiver pairs.
	
	Let us denote the tuple of $path[i-1]$ linked to $s^{0}$ with $(Y^{s^{0}}_{i-2}, \allowbreak Y^{s^{0}}_{i-1},\ell^{s^{0}}_{i-1})$ and the one linked to $s^{1}$ with $(Y^{s^{1}}_{i-2},Y^{s^{1}}_{i-1},\ell^{s^{1}}_{i-1})$.
	Similarly,  the tuple of $path[i+1]$ linked to $r^{0}$ is $(Y^{r^{0}}_{i},Y^{r^{0}}_{i+1},\allowbreak \ell^{r^{0}}_{i+1})$ and the one linked to $r^{1}$ is $(Y^{r^{1}}_{i},Y^{r^{1}}_{i+1},\ell^{r^{1}}_{i+1})$.
	Here, if $Y^{s^{0}}_{i-1}$ and $Y^{r^{0}}_{i}$ are linked, i.e., belong to the same payment, then adversary can conclude that $b=0$, otherwise $b=1$.
	If they belong to the same path then, $H(\ell^{b}_i) \oplus Y^{s^{0}}_{i-1} = Y^{r^{0}}_{i}$, otherwise $H(\ell^{b}_i) \oplus Y^{s^{0}}_{i-1} = Y^{r^{1}}_{i}$ where $b\in\{0,1\}$.
	Therefore deriving whether $Y^{s^{0}}_{i-1}$ and $Y^{r^{0}}_{i}$ are linked or not can be reduced to the knowledge of $H(\ell^{b}_i)$ value.
	Since $\ell^{0}_i$ and $\ell^{1}_i$ values are only given to the intermediary party $P_i$, the hash values $H(\ell^{0}_i)$ and $H(\ell^{0}_i)$ that distinguishes the possible paths are not known by $\adv$.
	More specifically, from the adversary's perspective, both $Y^{s^{0}}_{i-1} \oplus Y^{r^{0}}_{i}$ and $Y^{s^{0}}_{i-1} \oplus Y^{r^{1}}_{i}$ values are equally likely candidates for $H(\ell^{b}_i)$ where $b\in\{0,1\}$.
	Under the assumption that the encryption scheme $E_{pk}$ is IND-CCA secure, and $\mathcal{H}$ is a cryptographically secure hash function,
	the adversary cannot guess $b$ value with a probability significantly higher than $1/2$. 
\end{proof}

\section{\aname: Blockchain-Asynchronous Payments on Partially-Synchronous Networks}
\label{sec:async}

This section describes \aname{}, our multi-hop payment protocol that assumes the availability of a BFT committee associated with each channel but does not assume synchronous communications.

\subsection{Model and Overview}

\paragraph{Threat Model}

Committees in \aname{} are assumed to contain at most $f$ faulty members out of $n\ge3f{+}1$ members. 
Channel committee members can be selected randomly from a global blockchain committee using recent  
works~\cite{bortnikov2009brahms, algorand2017,ouroboros2017,david2021vrf},
which guarantee unbiased uniform selection and termination of the selection protocol.  Appx.~\ref{sec:committee} discusses the probability for a committee of size $n$ that is sampled from a global committee of size $N$ with $F$ faulty members to contain less than $f$ faulty members. 

\paragraph{Payment channels}
Committee members maintain a channel's balance, lock the amounts that correspond to on-going payments, and eventually release or confirmed these locked amounts.
To ensure that a payment is eventually processed or rejected,  intermediary parties cannot stop a payment that uses one of their channels, but they are kept informed of their balance.  
To close a channel one of the party has to explicitly request it from the committee members. 
Since the balance of a channel depends on the multi-hop payment that it is involved it,  the committee members then stop accepting new payments on the channel. Once all pending payments have been processed,  the value reported by at least $2f{+}1$ committee members is published on the blockchain as the final channel balance. 

\paragraph{Message format and payment Ids}
Once a channel has been created, like in \sname{}, the payment sender and receiver agree outside of the protocol on the payment. The sender identifies a path to the receiver. 
Messages use the same layered encryption format as in \sname{} with the difference that they do not include timeouts and tuples $(Y_{i-1}, Y_i, \ell_i)$ of hashes and random numbers. 
A committee member of a multi-hop payment in \aname{} trusts subsequent committees on the path to process it.
Upon success, the sender eventually assembles a payment proof made of $f{+}1$ confirmations from its committee. Committees use a total order broadcast to safely process potential concurrent multi-hop payments originating from different payment senders. \aname{} could leverage virtual channels~\cite{DBLP:conf/sp/DziembowskiEFM19} to use only one consensus operation per multi-hop channel creation/deletion and otherwise leverage reliable broadcast for subsequent payments over the multi-hop channel. 

A payment is given a monotonically increasing payment Id by the nodes sending or transmitting it because messages can be reordered by the network and to allow multiple identical payments to be correctly processed. Initially, the payment sender uses its locally maintained payment Id to communicate with its committee. Upon transmitting a payment, a committee replaces the Id it received by an Id it maintains to exchange with the subsequent committee. 

\subsection{Multi-hop protocol}

We present pseudocode of \aname{} in Alg.~\ref{alg:aname}.
For clarity,  we simplify it as follows: we do not show message encryptions or decryptions; the channel balance is only maintained for the payment sender; onion paths are replaced by the full payment path so that a payment can be identified using only the payment Id chosen by the payment sender; payment fees are also ignored. 

\begin{algorithm}[tbp]
\caption{\textit{Simplified \aname{}.}}
\label{alg:aname}
\begin{algorithmic}[1]
\footnotesize 

\State \hspace*{.5em} \underline{For party $P_i$:}

\State \textbf{function} \texttt{initChannel(Amount $v$)}
    \State \hspace*{2em} $myAvailBalance = myBalance = v$
    \State \hspace*{2em} $Id = 0$
    \State

\State \textbf{function} \texttt{pay(Node dest, Amount $v_0$)} \comalgo{Only for $P_0$} \label{l:initSt}
    \State \hspace*{2em} $path = \texttt{findPathTo}(dest)$
    \State \hspace*{2em} $myAvailBalance {\minuseq} v_0$ \label{l:lock}
    \State \hspace*{2em} $\texttt{send}(PAY, Id\texttt{++}, v, path)$ to $W(P_0, P_1)$ \label{l:initEnd} 
\State

\State \textbf{upon rcv} $f{+}1$ $(\textsc{REJECT}, Id,v_i,path)$ from $W(P_i, P_{i+1})$ \textbf{do}
    \State \hspace*{2em} $myAvailBalance {\pluseq} v_i$ 
\State

\State \textbf{upon rcv} $f{+}1$ $(\textsc{SUCCESS}, Id,v_i,path)$ from $W(P_i, P_{i+1})$ \textbf{do}
    \State \hspace*{2em} $myBalance {\minuseq} v_i$ 
\State

\State \hspace*{.5em} \underline{For members of committee $W(P_i, P_{i+1})$:}

\State \textbf{function} \texttt{initChannel(Amount $v$)}
    \State \hspace*{2em} $balance(P_i) = availBalance(P_i) =v$
    \State \hspace*{2em} $nextId(P_{i}) = 0$
\State 

\State \textbf{upon rcv} $(\textsc{PAY}, Id, v_i, path)$ from $P_i$ \textbf{do} \label{l:orderPaySt}
    \State \hspace*{2em} \textbf{if} $Id {\ge} nextId(path[0])$ \textbf{then} \comalgo{discard invalid Ids}
        \State \hspace*{4em} \texttt{totallyOrder}$(\textsc{PAY}, Id, v_i, path)$ \label{l:orderPayEnd}
    \State
    
\State \textbf{upon rcv} $f{+}1$ $(\textsc{PAY}, Id, v_{i}, path)$ from $W(P_{i-1},P_i)$ \textbf{do}  \label{l:orderPaySt2}
    \State \hspace*{2em} \texttt{totallyOrder}$(\textsc{PAY}, Id, v_{i}, path)$
    \State \label{l:orderPayEnd2} 
    
\State \textbf{upon rcv} $f{+}1$ $(\textsc{SUCCESS}, Id, v_{i+1}, path)$ from $W(P_{i+1},P_{i+2})$ \textbf{do}
    \State \hspace*{2em} \texttt{totallyOrder}$(\textsc{SUCCESS}, Id, v_{i+1}, path)$
    \State
    
\State \textbf{upon rcv} $f{+}1$ $(\textsc{REJECT}, Id, v_{i+1}, path)$ from $W(P_{i+1},P_{i+2})$ \textbf{do}
    \State \hspace*{2em} \texttt{totallyOrder}$(\textsc{REJECT}, Id, v_{i+1}, path)$
    \State
    
\State \textbf{upon} $(\textsc{PAY}, Id, v_i, path)$ ordered \textbf{and} $nextId(path[0]) \texttt{==} Id$ \textbf{do}  \label{l:interSt}
    \State \hspace*{2em} $nextId(path[0])\texttt{++}$
    \State \hspace*{2em} \textbf{if} $availBalance(P_i) \ge v_i$ \textbf{then}
        \State \hspace*{4em} $availBalance(P_i) {\minuseq} v_i$
        
        \State \hspace*{4em} \textbf{if} $path[-1] == P_{i+1}$ \textbf{then} \comalgo{last committee} \label{l:recCSt}
            \State \hspace*{6em} $balance(P_i) {\minuseq} v_i$
            \State \hspace*{6em} \textbf{if} $P_i$ has predecessor $P_{i-1}$ in $path$ \textbf{then}
                \State \hspace*{8em} $\texttt{send}(\textsc{SUCCESS}, Id, v_{i}, path)$ to $W(P_{i-1}, P_i)$
            \State \hspace*{6em} \textbf{else}
                \State \hspace*{8em} $\texttt{send}(\textsc{SUCCESS}, Id, v_i, path)$ to $P_i$
            \State \hspace*{6em} Inform $P_i$ and $P_{i+1}$ of new balance 
                \label{l:recCEnd}
        \State \hspace*{4em} \textbf{else} \comalgo{transfer to next committee}
            \State \hspace*{6em} $\texttt{send}(\textsc{PAY}, Id, v_{i+1}, path)$ to $W(P_{i+1}, P_{i+2})$
    \State \hspace*{2em} \textbf{else} \comalgo{insufficient balance}
        \State \hspace*{4em} \textbf{if} $P_i$ has predecessor $P_{i-1}$ in $path$ \textbf{then}
            \State \hspace*{6em} $\texttt{send}(\textsc{REJECT}, Id, v_i, path)$ to $W(P_{i-1}, P_i)$ 
        \State \hspace*{4em} \textbf{else}
            \State \hspace*{6em} $\texttt{send}(\textsc{REJECT}, Id, v_i, path)$ to $P_i$ 
    \State \label{l:interEnd}
    
\State \textbf{upon} $(\textsc{REJECT}, Id, v_i, path)$ ordered \textbf{and} being processed \textbf{do} \label{l:rejectSt}
    \State \hspace*{2em} $availBalance(P_i) {\pluseq} v_i$
    \State \hspace*{2em} \textbf{if} $P_i \texttt{==} path[0]$ \textbf{then}
        \State \hspace*{4em} $\texttt{send}(\textsc{REJECT}, Id, v_i, path)$ to $P_i$
    \State \hspace*{2em} \textbf{else} 
        \State \hspace*{4em} $\texttt{send}(\textsc{REJECT}, Id, v_i, path)$ to $W(P_{i-1}, P_i)$ \label{l:rejectEnd}
    \State
    
\State \textbf{upon} $(\textsc{SUCCESS}, Id, v_i, path)$ ordered \textbf{and} being processed \textbf{do} \label{l:doneISt}
    \State \hspace*{2em} $balance(P_i) \texttt{-=} v_i$
    \State \hspace*{2em} \textbf{if} $P_i$ has predecessor $P_{i-1}$ in $path$ \textbf{then}
        \State \hspace*{4em} $\texttt{send}(\textsc{SUCCESS}, Id, v_i, path)$ to $W(P_{i-1}, P_i)$
    \State \hspace*{2em} inform $P_i$ and $P_{i+1}$ of new balance \label{l:doneIEnd}

\end{algorithmic}
\end{algorithm}

\begin{figure}[t]
	\centering
	\includegraphics[width=\columnwidth]{./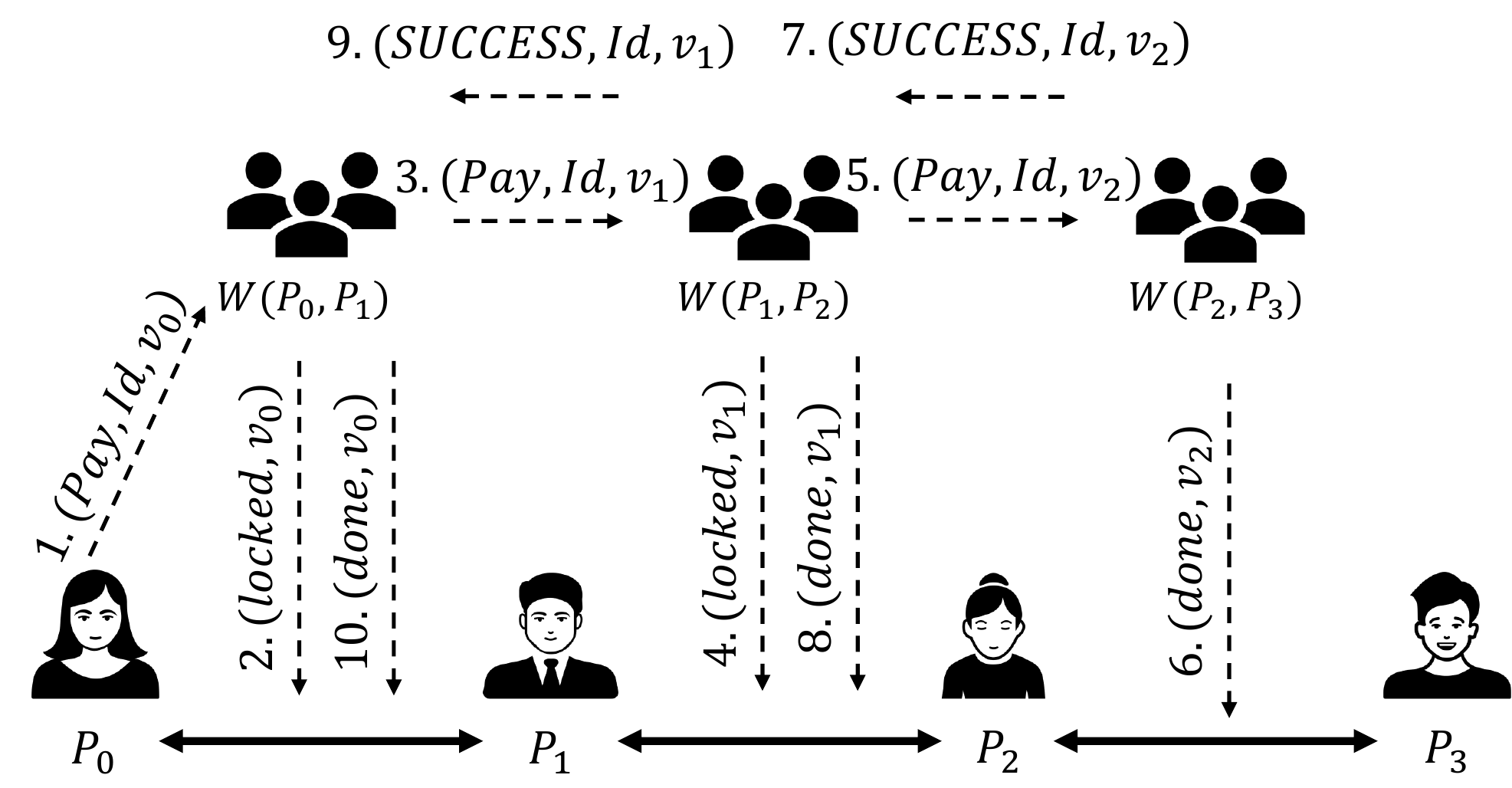}
	\caption{Illustration of the steps of \aname.}
	\label{fig:protocol2}
\end{figure}

To start a payment,  the sender sends a $\textsc{PAY}$ message to the committee of the first channel on the path towards the receiver. 
\aname{} follows a 2-phase approach, like \sname{},
with one phase for locking and one for confirming/revoking a payment. Payments implicitly belong in one of two states:
\begin{enumerate}
\item locked: the payment has been locally registered and transferred to the subsequent committee for processing.
\item done: the payment is completed for this channel. 
\end{enumerate}

Committee members and parties do not limit the time a payment can take. They trust subsequent committees to inform them if a payment is validated or rejected.  The last channel on a payment path does not use the locked state since it can directly approve  or reject a payment. 

After determining the amount and path, the sender $P_0$ informs the committee $W(P_0, P_1)$ of the first channel on the path that they intend to make a payment using a signed message that includes their local monotonically increasing payment Id, the amount, and the subsequent path (Lines~\ref{l:initSt}--\ref{l:initEnd} in Alg.~\ref{alg:aname}, and message 1 in Fig.~\ref{fig:protocol2}).  

The committee $W(P_0, P_1)$ then processes the payment request (Lines~\ref{l:orderPaySt}--\ref{l:orderPayEnd}). More precisely, the payment is added to a pool of payment requests that are then totally ordered by the committee members. Note that payments are not ordered according to any particular criteria so that all payments have an equal chance of being executed.  
After a payment request has been ordered,  the committee checks if i) it has the payment Id the committee is expecting from the payment sender and ii) the channel has enough balance.
If so,  all honest members decrease the available balance of $P_0$ by the amount of the payment. They send a signed message to the next committee informing them that the payment has been marked as locked and include the payment Id, the amount, and the path (Lines~\ref{l:interSt}--\ref{l:interEnd}, msg. 3 in Fig.~\ref{fig:protocol2}). They also inform $P_0$ and $P_1$ that the amount of the payment has been locked on the channel (msg. 2, Fig.~\ref{fig:protocol2}). 

Once a committee $W(P_i,P_{i+1})$ has received at least $f{+}1$ $\textsc{PAY}$ messages from members of the previous committee, the payment is added to the local pool of payment requests of the committee, and is eventually totally ordered  (Lines~\ref{l:orderPaySt2}--\ref{l:orderPayEnd2}). Once the payment has been ordered and is the next payment generated by the payment sender, committee members check whether the channel has enough balance to transfer the amount.  If so,  they lock the amount (in the full protocol, a committee uses a different monotonically increasing payment counter as an Id to transfer a payment to another committee). 
Last, they send a $\textsc{PAY}$ message to the next committee (msgs. 3 and 5 in Fig.~\ref{fig:protocol2}).

To complete our explanation of the first (locking) phase, it remains to consider the case when  $P_{i+1}$ is the receiver.  In this case,  $W(P_i, P_{i+1})$ does not need to wait for a further committee to decide on the payment.  They can simply declare the payment done and inform $P_{i+1}$ as well as a previous committee (if it exists) that the payment is completed (msgs. 6 and 7 in Fig.~\ref{fig:protocol2}). The process is described in Lines~\ref{l:recCSt}--\ref{l:recCEnd}. 

In the second phase, upon receiving $f{+}1$ \textsc{SUCCESS }messages from the subsequent committee indicating that the payment has been completed,  committee members totally order the \textsc{SUCCESS} message, update the balance of the two parties and send a \textsc{SUCCESS} message to the previous committee if it exists (Lines~\ref{l:doneISt}--\ref{l:doneIEnd}, msg. 9 in Fig.~\ref{fig:protocol2}). Upon receiving a \textsc{SUCCESS} message, committees can inform the parties (msgs. 8 and 10, Fig.~\ref{fig:protocol2}).
Rejected payments are also processed with total order and transmitted to the previous committee (Lines~\ref{l:rejectSt}--\ref{l:rejectEnd}). However, if a payment is rejected, only the payment sender is informed since they can then attempt to use a different path in the payment channel network if it exists.  
The other intermediaries are also informed so that they know that the funds are unlocked and can be used for other transactions.  

\subsection{Security and Privacy Analysis}
 
\aname{} does not provide privacy properties since the channels are controlled by BFT committees. In the following, we show that \aname{} satisfies the security properties, namely, atomicity and coin availability.
We omit the proof of correctness property.

\begin{theorem}
\aname{} provides atomicity.
\end{theorem}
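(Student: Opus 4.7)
The plan is to decompose atomicity into its two constituent requirements (balance security and the no-wormhole condition of Def.~\ref{def:atom}) and argue each from the two fundamental guarantees of \aname: (i) every channel committee is BFT and hence the total-order broadcast used locally yields a single, consistent sequence of decisions among honest members; and (ii) a committee only acts on a cross-committee message once it has received $f{+}1$ signed copies, which forces at least one honest sender.

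To establish balance security for an honest intermediary $P_i$, I would trace the SUCCESS wave backward. Suppose $W(P_i,P_{i+1})$ has debited $P_i$ by $v_{i+1}$; by the pseudocode this happens only when it totally orders $(\textsc{SUCCESS}, Id, v_{i+1}, path)$, which in turn requires $f{+}1$ SUCCESS messages from $W(P_{i+1},P_{i+2})$ (or from the receiver's committee). Using (ii), at least one honest member of that next committee must have issued SUCCESS, hence that committee had previously totally ordered the corresponding PAY and locked the amount. Chaining this backward through the BFT committees, the very existence of the SUCCESS on $\channel_{i+1}$ implies that $W(P_{i-1},P_i)$ had already totally ordered the initial PAY and locked $v_i$ on behalf of $P_i$. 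Liveness of total order broadcast on the BFT committee $W(P_{i-1},P_i)$ then guarantees that, upon receiving $f{+}1$ SUCCESS messages from $W(P_i,P_{i+1})$, it totally orders the SUCCESS and credits $P_i$ with $v_i$. The sender's case ($P_0$ debited only if $P_k$ credited) follows the same trace since the SUCCESS chain can only be initiated by the committee $W(P_{k-1},P_k)$ on behalf of the receiver.

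For the no-wormhole part, I would argue contrapositively. If $\cond_i$ is not satisfied for honest $P_{i+1}$ and $\channel_i$ is revoked, then by inspection of the committee code $W(P_i,P_{i+1})$ has totally ordered a REJECT. Because payments propagate forward via PAY and backward via SUCCESS/REJECT, and because no honest committee sends SUCCESS back to its predecessor without first locking and then observing $f{+}1$ SUCCESS messages from its successor, a REJECT at position $i$ implies that either no PAY was ever sent to $W(P_{i+1},P_{i+2})$ (the ``insufficient balance'' branch in Alg.~\ref{alg:aname}) or that a REJECT was received from it. In both subcases an inductive argument up the path (using (i) to ensure that no honest member of $W(P_j,P_{j+1})$ produces SUCCESS while its local total order contains the matching REJECT) shows that no $\channel_\beta$ with $\beta>i$ can be in state $\paid$, so no $(\channel_\alpha,\channel_\beta)$ pair with $\alpha<i<\beta$ can both be paid.

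The main obstacle is the cross-committee consistency: within a single committee, total order broadcast trivially gives a common sequence, but atomicity spans several committees that only communicate via $(f{+}1)$-out-of-$n$ signed messages. The crux of the proof is therefore a careful invariant linking the local totally ordered logs of adjacent committees, namely that $W(P_i,P_{i+1})$ can only transition a payment $Id$ to a locked/paid/revoked state if $W(P_{i-1},P_i)$ already transitioned the same $Id$ through the matching state. I expect this invariant, established by induction on the protocol step and relying on the unforgeability of committee signatures together with the quorum property used when aggregating $f{+}1$ messages, to carry the bulk of the formal work; the remainder of the proof then reduces to mechanical case analysis on the pseudocode.
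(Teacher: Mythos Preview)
Your proposal is correct and follows essentially the same approach as the paper's own (brief) proof sketch: both rely on the fact that every committee is BFT, that a committee only acts on a cross-committee message after receiving $f{+}1$ signed copies (so at least one honest sender is involved), and that the \textsc{SUCCESS} wave can therefore only originate at the receiver's committee and must propagate intact back to the sender, yielding ``all channels pay or none do.'' Your write-up is considerably more explicit---you separate balance security from the no-wormhole clause and articulate the cross-committee invariant that the paper leaves implicit---but the underlying mechanism and the key ideas are the same.
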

\begin{proof}[Proof Sketch] 
If $P_i$ pays in $\channel_{i}$, it means that $2f{+}1$ members of the associated committee voted to totally order the payment and execute it. At least $f{+}1$ correct members of this quorum send a \textsc{SUCCESS} message with the payment Id to the committee members of $\channel_{i-1}$. Upon receiving $f{+}1$ \textsc{SUCCESS} messages, the correct committee members of $\channel_{i-1}$ update $P_i$'s balance. 
Since all committees tolerate $f$ faulty members, the \textsc{SUCCESS} message is sent (starting from the receiver) if all channels have enough balance, otherwise, the locked channels are revoked using \textsc{REJECT} messages.
In other words, $P_0$ only pays after all intermediaries have paid and the receiver $P_k$ has been paid by $P_{k-1}$. Therefore, all channels pay or none do. Furthermore, there is no scenario where $\channel_{i}$ is not paid,  but $\channel_{\alpha}$ and $\channel_{\beta}$ are paid where $\alpha{<}i{<}\beta$.
\end{proof}
 
\begin{theorem}
\aname{} provides coin availability.
\end{theorem}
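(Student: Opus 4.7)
The plan is to trace the fate of a locked payment along the path and argue that in each committee the pending state is eventually resolved by the total order broadcast, and that the resolution propagates both forwards and backwards along the path in finite time. The key enabling facts are that every channel committee has at most $f$ faulty members out of $n \ge 3f{+}1$ (so the internal total order broadcast terminates under partial synchrony), that a threshold of $f{+}1$ matching messages is sufficient for a receiving committee to act (so the $f$ faulty members cannot block progress), and that the path length $k$ is finite.

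First I would argue that the locking phase for a payment with Id~$Id$ issued by $P_0$ reaches the last committee $W(P_{k-1},P_k)$ in finite time. Inductively, suppose at least $f{+}1$ correct members of $W(P_{i-1},P_i)$ have sent $(\textsc{PAY}, Id, v_i, path)$ to $W(P_i,P_{i+1})$. By the reliable point-to-point channels and partial synchrony, the next committee eventually receives these $f{+}1$ messages, totally orders the request (Lines~\ref{l:orderPaySt2}--\ref{l:orderPayEnd2}), and either reaches Lines~\ref{l:interSt}--\ref{l:interEnd} to forward a $\textsc{PAY}$ to the successor, or produces a $\textsc{REJECT}$ because the channel lacks balance. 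Termination of the total order broadcast follows from the standard result used by \aname{} (\emph{e.g.}\ PBFT/HotStuff under partial synchrony); this is what makes the inductive step go through.

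Next I would argue that the last committee $W(P_{k-1},P_k)$ always reaches a decision: upon totally ordering the $\textsc{PAY}$ it either enters the $path[-1]\texttt{==}P_{i+1}$ branch (Lines~\ref{l:recCSt}--\ref{l:recCEnd}) and dispatches $\textsc{SUCCESS}$, or dispatches $\textsc{REJECT}$. In either case, at least $f{+}1$ correct members of $W(P_{k-1},P_k)$ emit matching messages towards $W(P_{k-2},P_{k-1})$. Then, by backward induction on~$i$ from $k{-}1$ down to $1$, once $f{+}1$ correct members of $W(P_i,P_{i+1})$ have emitted a $\textsc{SUCCESS}$ or $\textsc{REJECT}$ for $Id$, the committee $W(P_{i-1},P_i)$ eventually receives them, totally orders the message, executes Lines~\ref{l:doneISt}--\ref{l:doneIEnd} or Lines~\ref{l:rejectSt}--\ref{l:rejectEnd}, updates the balance or releases the locked amount, and forwards the outcome to the predecessor. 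Hence every channel on the path transitions from $\locked$ to either $\paid$ or $\revoked$ in finite time, which is exactly the coin availability property.

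The main obstacle I expect is subtle rather than combinatorial: it is to justify that the total order broadcast inside each committee always terminates for the specific message being tracked, despite possibly many concurrent payments sharing the same channel and despite the $\textsc{PAY}$ being gated on the condition $nextId(path[0]){=}{=}Id$ in Line~\ref{l:interSt}. I would address this by invoking the termination guarantee of the underlying consensus primitive together with the monotonicity of per-sender payment Ids: each pending payment Id is incremented exactly once, so a payment with Id~$Id$ can be held at most behind the finitely many earlier payments of the same sender, each of which itself terminates by the same induction. A secondary point worth stating explicitly is that correct intermediaries cannot unilaterally cancel a locked payment in \aname{} (unlike \sname{}, there is no timeout path to the blockchain), so liveness truly rests on the committees; this is precisely the setting in which the $3f{+}1$ assumption and partial synchrony must do the work.
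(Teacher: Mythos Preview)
Your proposal is correct and follows essentially the same approach as the paper: both arguments rest on the facts that every committee has at least $2f{+}1$ honest members, that the total order broadcast among them terminates, and that $f{+}1$ matching messages suffice to trigger the next committee, so no hop can stall the payment indefinitely. Your version is more explicit---spelling out the forward induction to the last committee and the backward induction on the $\textsc{SUCCESS}$/$\textsc{REJECT}$ messages, and addressing the $nextId$ gating subtlety---whereas the paper gives a compressed proof sketch that argues directly that no committee can be the one to stall; but the underlying reasoning is the same.
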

\begin{proof}[Proof Sketch]
A payment that does not terminate remains in state $\locked$ forever. 
By assumption, messages are eventually delivered, which means that such a payment is stalled by a committee that does not send $f{+}1$ $\textsc{PAY}$ messages to the next committee, or $f{+}1$ $\textsc{SUCCESS}$ or $\textsc{REJECT}$ to the previous committee. 
However, all committees contain at least $2f{+}1$ honest members who upon delivering messages from the previous or next committee react accordingly. Due to the use of total order broadcast, the honest committee members of the committee eventually all agree on the state of a channel as they update it with the same messages and in the same order. Thus, at least $2f{+}1$ members eventually vote on a payment and provide sufficient signatures to allow a payment to eventually terminate.   
\end{proof}

\section{Discussion and Comparison}
\label{sec:discussion}

This section discusses how committee members are incentivized to behave honestly. It then discusses the complexity and latency of our two protocols. Last, it compares their security and privacy properties to those of other payment channel networks.  

\subsection{Incentives}

\begin{table}[t]
\centering
\caption{Network latencies and expected message complexities of a $k$-hop payment with committees of size $n$. }
\begin{tabular}{p{2.4cm}|p{3.2cm}|p{1.7cm}}
  & Msg complexity & Latency \\
 \hline
 \sname & $8nk + 3k + 2$ & $(8k+2)\delta$ \\
 \hline
 \aname & $(2k{-}1)(2n{+}C_{msg}(n){+}1)$ & $2kC_{lat}(n){+}2\delta$ \\
 \hline
 \aname (w.~\cite{castro1999practical}) & $\Theta(4kn^2)$ & $\Theta(6k\delta)$\\
 \hline
 \aname (w.~\cite{yin2019hotstuff}) & $\Theta(18kn)$ & $\Theta(16k\delta)$\\
\end{tabular}
\label{tab:2}
\end{table}

We have so far considered honest and malicious parties/committee members.  
In contrast, 
Brick has three incentives to ensure that rational parties follow the protocol. 
 First, committee members are incentivized to sign updates.  To do so,  members are paid an \emph{update fee} by the channel parties upfront to sign a channel update and a channel party waits for $2f {+1}$ signatures to confirm the update. It may be argued that a committee member could simply collect the update fee and not sign the update. 
 However, channel parties avoid such committee members for future updates and thus decrease their long-term expected rewards,  meaning that rational parties indeed sign the update.
Second,  it prevents rational committee members from misbehaving,  e.g., by submitting an invalid state during closure. 
During channel opening,  committee members deposit a \emph{collateral} (at least equal to $C/f$, where $C$ is the channel capacity) that remains locked until the channel is closed. If the channel closes in the freshest state, then the collateral is returned to committee members. However, if a channel party provides a proof that a committee member misbehaved, then the collateral is paid to the channel party.  
A malicious channel party may bribe rational committee members to close in an invalid state. The bribe, however, needs to be larger than the collateral locked by the committee member. 
The cost of such a bribe is proven to exceed the capacity of the channel and hence is not beneficial for a channel party.  Thus,  rational channel parties do not execute bribing attacks and rational committee members only submit correct states.  
Third, it incentivizes committee members to sign the closing transaction promptly by paying a \emph{closing fee} to
the first $2f{+}1$ signing committee members.  

These incentives can be directly included in our protocols so that they would assume rational parties and committee members instead of correct ones. 
It is also necessary to incentivize intermediaries to forward a payment, even if they have to pay update fees to their committee members.  
We can leverage the fee mechanism used in Lightning, where intermediaries receive a fee upon successful completion of a payment, but need to ensure that the average fee of the intermediary exceeds the fee paid for the channel update.  
Let $p_S$ be the approximated success probability of payments (based on previous payments) and $f_i$ be the fee of an intermediary $P_i$.  The expected earned fee is then $f_{i}{\cdot} p_S$. 
Note that multi-hop payments require two updates and for each update the channel party has to pay the update fee $f_{CM}$ to all $3 f{+}1$ committee members to receive $2 f{+}1$ signatures in the worst case.
For rational intermediaries to forward a payment, we need $f_{i} {\cdot} p_S {>} 2 (3f+1)f_{CM}$.

\subsection{Message Complexity and Latency}
\label{sec:msg_complexity}

Table~\ref{tab:2} summarizes the analytical network latency and message complexity of \sname{} and \aname{}. 
The complexities of \aname{} are evaluated under the assumption that the system is in the synchronous phase of a partially synchronous network, i.e., during the phase in which progress can be made.  
$C_{lat}(n)$ and $C_{msg}(n)$ denote the expected network latency and msg complexity, respectively, of the underlying total order broadcast for a committee of size $n$. Potential realizations are PBFT~\cite{castro1999practical} and HotStuff~\cite{yin2019hotstuff}.
Assuming synchronous networks allow \sname{} to have lower message complexity and latency than \aname{}. 
The performance of \aname{} heavily depends on the total order broadcast protocol used. With PBFT~\cite{castro1999practical}, the bandwidth usage is higher than with HotStuff~\cite{yin2019hotstuff}. Inversely, the expected latency of \aname{} is lower with PBFT than with HotStuff.

\begin{table*}[t]
	\centering
	\caption{Comparison of Payment Channel (Networks).  BS\&C\&CA is balance security, correctness and coin availability, R. Anon. refers to relationship anonymity, Par. Sync. is partial synchrony, and L. and G, resp. denote local and global.}
	\begin{tabular}{c|c|c|c|c|c|c|c|c|c|c}
		Ref. & Protocol &Multi-hop & \multicolumn{2}{c|}{ Synchrony}  &  \multicolumn{3}{c|}{Security}  &  \multicolumn{3}{c}{Privacy} \\\
		&   & Payment  & Blockchain & Network  & Secured  by & BS\&C\&CA  & Atomicity & Value & Endpoint & R. Anon. \\
		\hline
		\cite{poon2016bitcoin}	& Lightning & \cmark & \multicolumn{2}{c|}{Synchronous} & HTLC & \cmark & \xmark  & \cmark & \cmark & \xmark \\
		\cite{DBLP:conf/ndss/MalavoltaMSKM19}& AMHL & \cmark & \multicolumn{2}{c|}{Synchronous}   & AMHL & \cmark & \cmark  & \cmark & \cmark & \cmark  \\
		\cite{DBLP:conf/sp/DziembowskiEFM19}& Perun & \cmark & \multicolumn{2}{c|}{Synchronous}  & Smart Contract & \cmark  & \cmark & \cmark & \xmark & \xmark \\
		\cite{DBLP:conf/uss/AumayrMKM21} & Blitz & \cmark & \multicolumn{2}{c|}{Synchronous}  &Pay-or-revoke & \cmark & \cmark  & \cmark & \cmark & \xmark \\
		\cite{DBLP:conf/fc/0001BBKM19} & Sprites & \cmark & \multicolumn{2}{c|}{Synchronous}   & Smart Contract & \cmark & \cmark  & \cmark & \xmark & \xmark  \\
		\hline
		\hline
		\cite{lind2018teechain}	& Teechain & \cmark &  Async. & Par. Sync.  & TEE & \cmark & \cmark  & \cmark & \cmark & \cmark \\
		\cite{avarikioti2019brick}	& Brick & \xmark & \multicolumn{2}{c|}{Asynchronous}  & L. Committee  & NA & NA  & NA & NA & NA \\
		\hline
		\hline
		\S~\ref{sec:sync} &	\sname & \cmark & Async. & Sync. & L. Com. \& AMHL & \cmark & \cmark  & \cmark & \cmark & \cmark \\
		\S~\ref{sec:async} &	\aname & \cmark & Async. & Par. Sync. & G. Committee & \cmark & \cmark   & \xmark & \xmark & \xmark \\
		\hline
	\end{tabular}
	\label{tab:1}
\end{table*}

\subsection{Comparison of Payment Channel Networks}

Several protocols have considered reducing the locked time of coins in Lightning~\cite{DBLP:conf/fc/0001BBKM19}, improve the payment path privacy~\cite{DBLP:conf/ndss/MalavoltaMSKM19}, minimize the communication rounds~\cite{DBLP:conf/uss/AumayrMKM21}, and increase the efficiency of multi-hop payments~\cite{DBLP:conf/sp/DziembowskiEFM19}.
However, the security of these protocols relies on blockchain synchrony.
Teechain~\cite{lind2018teechain} is the only multi-hop payment protocol that supports partial synchrony thanks to TEEs. 
   
Table~\ref{tab:1} summarizes our analysis and considers multi-hop payment support, synchrony assumptions (for the blockchain and the network), as well as the security and privacy properties we defined in Section~\ref{sec:securepcn}.
These properties apply to multi-hop payments, but we also compare to Brick~\cite{avarikioti2019brick} and use NA (not applicable) for it. 
For security, we combine the basic security properties, balance security, correctness, and coin availability, that are satisfied by all of the protocols.  

With regard to atomicity,  Blitz~\cite{DBLP:conf/uss/AumayrMKM21} uses a common revocation transaction to prevent the wormhole attack where some intermediaries are skipped during a successful payment.  
Similarly,  the global manager in Sprites~\cite{DBLP:conf/fc/0001BBKM19} ensures that every intermediary receives the preimage and updates its channel.  
Perun~\cite{DBLP:conf/sp/DziembowskiEFM19} guarantees atomicity because payments involve the endpoints of virtual channels and intermediaries do not explicitly participate.
\sname{} adapts mechanisms from AMHL~\cite{DBLP:conf/ndss/MalavoltaMSKM19}, whereas \aname{} assumes BFT committees. One might consider providing incentives against the attack as an alternative protection against rational behaviors.  

We consider value privacy, endpoint privacy, and relationship anonymity in the absence of side channel attacks. 
Our second protocol, \aname{}, does not provide any privacy since the committees control the channels. They see the payment values and the committee of the receiver is aware of its identity.
The authors of Sprites state that their model does not provide privacy guarantees, though they do not reveal the payment value and hence achieve value privacy.  
Thereby, our analysis focuses on the rest of the protocols.
Value privacy is about hiding the payment amount from outsiders, and it is satisfied by all protocols except \aname{}.
Endpoint privacy can be achieved by onion routing,  which hides the sender and receiver from the intermediaries.  Lightning, AMHL,  and Blitz all use onion routing.  
Since the virtual channel construction in Perun reveals the endpoints to the intermediary parties, endpoint privacy and thereby relationship anonymity are not satisfied.
Relationship anonymity simply implies that non-connected intermediary parties should not know if they are part of the same payment or not.
Lightning and Blitz do not satisfy this property because of the common hash and revocation transaction shared by all channels in the payment path.
\sname and AMHL protocols satisfy this property using the additive hash construction, which was initially proposed in~\cite{malavolta2017concurrency}.
Finally, Teechain~\cite{lind2018teechain} also satisfies the privacy properties under the assumption that TEEs do not collude and leak any information about the payments, e.g., due to side channel attacks~\cite{brasser2017software, wang2017leaky}.

\section{Conclusion}
\label{sec:conclusion}

In this paper, we have demonstrated that it is impossible to design a multi-hop payment protocol assuming both network asynchrony and faulty channels.   
We then showed that network synchrony or correct channels allow the design of a multi-hop payment protocol by presenting one protocol for each case. 
We detailed two different committee-based multi-hop payment protocols that assume synchronous communications and possibly faulty channels, or asynchronous communication and correct channels, respectively.  
Our protocols do not require blockchain synchrony to solve disputes among parties. 
\sname{} tolerates faulty committee and assumes synchronous communications.  
\aname{} relies on BFT committees that totally order payment requests, which can be implemented using several recent consensus protocols depending on the synchrony model one assumes, and trust each other to eventually process a payment.
Because it has stronger synchrony assumptions, \sname{} generates fewer messages and has lower latency than \aname{}. In future work, we will consider using virtual channels in \aname{}, which could allow committees to leverage reliable and consistent broadcast in multi-hop payments instead of total order broadcast.  We will also work on adding privacy guarantees to \aname{}. Furthermore,  we want to design a version of \aname{} using an accountable BFT protocol~\cite{civit2021polygraph} to deal with committees with more than $f$ fulty nodes.

\bibliographystyle{ACM-Reference-Format}
\bibliography{references}


\begin{thebibliography}{54}


\ifx \showCODEN    \undefined \def \showCODEN     #1{\unskip}     \fi
\ifx \showDOI      \undefined \def \showDOI       #1{#1}\fi
\ifx \showISBNx    \undefined \def \showISBNx     #1{\unskip}     \fi
\ifx \showISBNxiii \undefined \def \showISBNxiii  #1{\unskip}     \fi
\ifx \showISSN     \undefined \def \showISSN      #1{\unskip}     \fi
\ifx \showLCCN     \undefined \def \showLCCN      #1{\unskip}     \fi
\ifx \shownote     \undefined \def \shownote      #1{#1}          \fi
\ifx \showarticletitle \undefined \def \showarticletitle #1{#1}   \fi
\ifx \showURL      \undefined \def \showURL       {\relax}        \fi
\providecommand\bibfield[2]{#2}
\providecommand\bibinfo[2]{#2}
\providecommand\natexlab[1]{#1}
\providecommand\showeprint[2][]{arXiv:#2}

\bibitem[1ml.com(2021)]%
        {1ml}
\bibfield{author}{\bibinfo{person}{1ml.com}.} \bibinfo{year}{2021}\natexlab{}.
\newblock \bibinfo{title}{Real-Time Lightning Network Statistics}.
\newblock
\newblock
\newblock
\shownote{Available at: \url{https://1ml.com/statistics}}.


\bibitem[Asokan et~al\mbox{.}(1998)]%
        {AsokanSW98}
\bibfield{author}{\bibinfo{person}{N. Asokan}, \bibinfo{person}{Victor Shoup},
  {and} \bibinfo{person}{Michael Waidner}.} \bibinfo{year}{1998}\natexlab{}.
\newblock \showarticletitle{Asynchronous Protocols for Optimistic Fair
  Exchange}. In \bibinfo{booktitle}{\emph{S\&P}}.
\newblock


\bibitem[Aumayr et~al\mbox{.}(2021a)]%
        {DBLP:conf/sp/AumayrMEEFRHM21}
\bibfield{author}{\bibinfo{person}{Lukas Aumayr}, \bibinfo{person}{Matteo
  Maffei}, \bibinfo{person}{Oguzhan Ersoy}, \bibinfo{person}{Andreas Erwig},
  \bibinfo{person}{Sebastian Faust}, \bibinfo{person}{Siavash Riahi},
  \bibinfo{person}{Kristina Host{\'{a}}kov{\'{a}}}, {and}
  \bibinfo{person}{Pedro Moreno{-}Sanchez}.} \bibinfo{year}{2021}\natexlab{a}.
\newblock \showarticletitle{Bitcoin-Compatible Virtual Channels}. In
  \bibinfo{booktitle}{\emph{IEEE S\&P}}.
\newblock


\bibitem[Aumayr et~al\mbox{.}(2021b)]%
        {DBLP:conf/uss/AumayrMKM21}
\bibfield{author}{\bibinfo{person}{Lukas Aumayr}, \bibinfo{person}{Pedro
  Moreno{-}Sanchez}, \bibinfo{person}{Aniket Kate}, {and}
  \bibinfo{person}{Matteo Maffei}.} \bibinfo{year}{2021}\natexlab{b}.
\newblock \showarticletitle{Blitz: Secure Multi-Hop Payments Without Two-Phase
  Commits}. In \bibinfo{booktitle}{\emph{{USENIX} Security}}.
\newblock


\bibitem[Auvolat et~al\mbox{.}(2020)]%
        {auvolat2020money}
\bibfield{author}{\bibinfo{person}{Alex Auvolat}, \bibinfo{person}{Davide
  Frey}, \bibinfo{person}{Michel Raynal}, {and} \bibinfo{person}{Fran{\c{c}}ois
  Ta{\"\i}ani}.} \bibinfo{year}{2020}\natexlab{}.
\newblock \showarticletitle{Money transfer made simple: a specification, a
  generic algorithm, and its proof}.
\newblock \bibinfo{journal}{\emph{arXiv preprint arXiv:2006.12276}}
  (\bibinfo{year}{2020}).
\newblock


\bibitem[Avarikioti et~al\mbox{.}(2021)]%
        {avarikioti2019brick}
\bibfield{author}{\bibinfo{person}{Georgia Avarikioti},
  \bibinfo{person}{Eleftherios~Kokoris Kogias}, \bibinfo{person}{Roger
  Wattenhofer}, {and} \bibinfo{person}{Dionysis Zindros}.}
  \bibinfo{year}{2021}\natexlab{}.
\newblock \showarticletitle{Brick: Asynchronous Payment Channels}.
\newblock \bibinfo{journal}{\emph{FC}}.
\newblock


\bibitem[Bortnikov et~al\mbox{.}(2009)]%
        {bortnikov2009brahms}
\bibfield{author}{\bibinfo{person}{Edward Bortnikov}, \bibinfo{person}{Maxim
  Gurevich}, \bibinfo{person}{Idit Keidar}, \bibinfo{person}{Gabriel Kliot},
  {and} \bibinfo{person}{Alexander Shraer}.} \bibinfo{year}{2009}\natexlab{}.
\newblock \showarticletitle{Brahms: Byzantine resilient random membership
  sampling}.
\newblock \bibinfo{journal}{\emph{Computer Networks}} \bibinfo{volume}{53},
  \bibinfo{number}{13} (\bibinfo{year}{2009}), \bibinfo{pages}{2340--2359}.
\newblock


\bibitem[Bracha and Toueg(1985)]%
        {bracha1985asynchronous}
\bibfield{author}{\bibinfo{person}{Gabriel Bracha} {and} \bibinfo{person}{Sam
  Toueg}.} \bibinfo{year}{1985}\natexlab{}.
\newblock \showarticletitle{Asynchronous consensus and broadcast protocols}.
\newblock \bibinfo{journal}{\emph{JACM}} \bibinfo{volume}{32},
  \bibinfo{number}{4} (\bibinfo{year}{1985}), \bibinfo{pages}{824--840}.
\newblock


\bibitem[Brasser et~al\mbox{.}(2017)]%
        {brasser2017software}
\bibfield{author}{\bibinfo{person}{Ferdinand Brasser}, \bibinfo{person}{Urs
  M{\"u}ller}, \bibinfo{person}{Alexandra Dmitrienko}, \bibinfo{person}{Kari
  Kostiainen}, \bibinfo{person}{Srdjan Capkun}, {and}
  \bibinfo{person}{Ahmad-Reza Sadeghi}.} \bibinfo{year}{2017}\natexlab{}.
\newblock \showarticletitle{Software Grand Exposure:$\{SGX\}$ Cache Attacks Are
  Practical}. In \bibinfo{booktitle}{\emph{WOOT}}.
\newblock


\bibitem[Castro et~al\mbox{.}(1999)]%
        {castro1999practical}
\bibfield{author}{\bibinfo{person}{Miguel Castro}, \bibinfo{person}{Barbara
  Liskov}, {et~al\mbox{.}}} \bibinfo{year}{1999}\natexlab{}.
\newblock \showarticletitle{Practical byzantine fault tolerance}. In
  \bibinfo{booktitle}{\emph{OSDI}}, Vol.~\bibinfo{volume}{99}.
  \bibinfo{pages}{173--186}.
\newblock


\bibitem[Chandra et~al\mbox{.}(1996)]%
        {chandra1996weakest}
\bibfield{author}{\bibinfo{person}{Tushar~Deepak Chandra},
  \bibinfo{person}{Vassos Hadzilacos}, {and} \bibinfo{person}{Sam Toueg}.}
  \bibinfo{year}{1996}\natexlab{}.
\newblock \showarticletitle{The weakest failure detector for solving
  consensus}.
\newblock \bibinfo{journal}{\emph{JACM}} \bibinfo{volume}{43},
  \bibinfo{number}{4} (\bibinfo{year}{1996}), \bibinfo{pages}{685--722}.
\newblock


\bibitem[Civit et~al\mbox{.}(2021)]%
        {civit2021polygraph}
\bibfield{author}{\bibinfo{person}{Pierre Civit}, \bibinfo{person}{Seth
  Gilbert}, {and} \bibinfo{person}{Vincent Gramoli}.}
  \bibinfo{year}{2021}\natexlab{}.
\newblock \showarticletitle{Polygraph: Accountable byzantine agreement}. In
  \bibinfo{booktitle}{\emph{2021 IEEE 41st International Conference on
  Distributed Computing Systems (ICDCS)}}. IEEE, \bibinfo{pages}{403--413}.
\newblock


\bibitem[Collins et~al\mbox{.}(2020)]%
        {collins2020online}
\bibfield{author}{\bibinfo{person}{Daniel Collins}, \bibinfo{person}{Rachid
  Guerraoui}, \bibinfo{person}{Jovan Komatovic}, \bibinfo{person}{Petr
  Kuznetsov}, \bibinfo{person}{Matteo Monti}, \bibinfo{person}{Matej Pavlovic},
  \bibinfo{person}{Yvonne-Anne Pignolet}, \bibinfo{person}{Dragos-Adrian
  Seredinschi}, \bibinfo{person}{Andrei Tonkikh}, {and}
  \bibinfo{person}{Athanasios Xygkis}.} \bibinfo{year}{2020}\natexlab{}.
\newblock \showarticletitle{Online payments by merely broadcasting messages}.
  In \bibinfo{booktitle}{\emph{2020 50th Annual IEEE/IFIP International
  Conference on Dependable Systems and Networks (DSN)}}. IEEE,
  \bibinfo{pages}{26--38}.
\newblock


\bibitem[Danezis and Goldberg(2009)]%
        {danezis2009sphinx}
\bibfield{author}{\bibinfo{person}{George Danezis} {and} \bibinfo{person}{Ian
  Goldberg}.} \bibinfo{year}{2009}\natexlab{}.
\newblock \showarticletitle{Sphinx: A compact and provably secure mix format}.
  In \bibinfo{booktitle}{\emph{S\&P}}.
\newblock


\bibitem[Decker and Wattenhofer(2015)]%
        {decker2015fast}
\bibfield{author}{\bibinfo{person}{Christian Decker} {and}
  \bibinfo{person}{Roger Wattenhofer}.} \bibinfo{year}{2015}\natexlab{}.
\newblock \showarticletitle{A fast and scalable payment network with bitcoin
  duplex micropayment channels}. In \bibinfo{booktitle}{\emph{SSS}}.
\newblock


\bibitem[Decouchant et~al\mbox{.}(2022)]%
        {decouchant2022damysus}
\bibfield{author}{\bibinfo{person}{J{\'e}r{\'e}mie Decouchant},
  \bibinfo{person}{David Kozhaya}, \bibinfo{person}{Vincent Rahli}, {and}
  \bibinfo{person}{Jiangshan Yu}.} \bibinfo{year}{2022}\natexlab{}.
\newblock \showarticletitle{DAMYSUS: streamlined BFT consensus leveraging
  trusted components}. In \bibinfo{booktitle}{\emph{EuroSys}}.
  \bibinfo{pages}{1--16}.
\newblock


\bibitem[Dryja and Milano(2016)]%
        {dryja2016unlinkable}
\bibfield{author}{\bibinfo{person}{Thaddeus Dryja} {and}
  \bibinfo{person}{Scaling~Bitcoin Milano}.} \bibinfo{year}{2016}\natexlab{}.
\newblock \showarticletitle{Unlinkable outsourced channel monitoring}.
\newblock \bibinfo{journal}{\emph{Scaling Bitcoin Milan}}
  (\bibinfo{year}{2016}).
\newblock


\bibitem[Dziembowski et~al\mbox{.}(2019b)]%
        {DBLP:conf/eurocrypt/DziembowskiEFHH19}
\bibfield{author}{\bibinfo{person}{Stefan Dziembowski}, \bibinfo{person}{Lisa
  Eckey}, \bibinfo{person}{Sebastian Faust}, \bibinfo{person}{Julia Hesse},
  {and} \bibinfo{person}{Kristina Host{\'{a}}kov{\'{a}}}.}
  \bibinfo{year}{2019}\natexlab{b}.
\newblock \showarticletitle{Multi-party Virtual State Channels}. In
  \bibinfo{booktitle}{\emph{{Eurocrypt}}}.
\newblock


\bibitem[Dziembowski et~al\mbox{.}(2019a)]%
        {DBLP:conf/sp/DziembowskiEFM19}
\bibfield{author}{\bibinfo{person}{Stefan Dziembowski}, \bibinfo{person}{Lisa
  Eckey}, \bibinfo{person}{Sebastian Faust}, {and} \bibinfo{person}{Daniel
  Malinowski}.} \bibinfo{year}{2019}\natexlab{a}.
\newblock \showarticletitle{Perun: Virtual Payment Hubs over Cryptocurrencies}.
  In \bibinfo{booktitle}{\emph{{IEEE} S\&P}}.
\newblock


\bibitem[Eckey et~al\mbox{.}(2020)]%
        {eckey2020splitting}
\bibfield{author}{\bibinfo{person}{Lisa Eckey}, \bibinfo{person}{Sebastian
  Faust}, \bibinfo{person}{Kristina Host{\'a}kov{\'a}}, {and}
  \bibinfo{person}{Stefanie Roos}.} \bibinfo{year}{2020}\natexlab{}.
\newblock \showarticletitle{Splitting Payments Locally While Routing
  Interdimensionally.}
\newblock \bibinfo{journal}{\emph{Cryptol. ePrint Arch.}}
  (\bibinfo{year}{2020}).
\newblock


\bibitem[Fischer et~al\mbox{.}(1985)]%
        {fischer1985impossibility}
\bibfield{author}{\bibinfo{person}{Michael~J Fischer}, \bibinfo{person}{Nancy~A
  Lynch}, {and} \bibinfo{person}{Michael~S Paterson}.}
  \bibinfo{year}{1985}\natexlab{}.
\newblock \showarticletitle{Impossibility of distributed consensus with one
  faulty process}.
\newblock \bibinfo{journal}{\emph{JACM}} \bibinfo{volume}{32},
  \bibinfo{number}{2} (\bibinfo{year}{1985}), \bibinfo{pages}{374--382}.
\newblock


\bibitem[Galindo et~al\mbox{.}(2021)]%
        {david2021vrf}
\bibfield{author}{\bibinfo{person}{David Galindo}, \bibinfo{person}{Jia Liu},
  \bibinfo{person}{Mihair Ordean}, {and} \bibinfo{person}{Jin-Mann Wong}.}
  \bibinfo{year}{2021}\natexlab{}.
\newblock \showarticletitle{Fully Distributed Verifiable Random Functions and
  their Application to Decentralised Random Beacons}. In
  \bibinfo{booktitle}{\emph{EuroS\&P}}.
\newblock


\bibitem[Garay et~al\mbox{.}(2015)]%
        {garay2015bitcoin}
\bibfield{author}{\bibinfo{person}{Juan Garay}, \bibinfo{person}{Aggelos
  Kiayias}, {and} \bibinfo{person}{Nikos Leonardos}.}
  \bibinfo{year}{2015}\natexlab{}.
\newblock \showarticletitle{The bitcoin backbone protocol: Analysis and
  applications}. In \bibinfo{booktitle}{\emph{Annual international conference
  on the theory and applications of cryptographic techniques}}.
  \bibinfo{pages}{281--310}.
\newblock


\bibitem[Gilad et~al\mbox{.}(2017)]%
        {algorand2017}
\bibfield{author}{\bibinfo{person}{Yossi Gilad}, \bibinfo{person}{Rotem Hemo},
  \bibinfo{person}{Silvio Micali}, \bibinfo{person}{Georgios Vlachos}, {and}
  \bibinfo{person}{Nickolai Zeldovich}.} \bibinfo{year}{2017}\natexlab{}.
\newblock \showarticletitle{Algorand: Scaling Byzantine Agreements for
  Cryptocurrencies}. In \bibinfo{booktitle}{\emph{SOSP}}.
\newblock


\bibitem[Gudgeon et~al\mbox{.}(2020)]%
        {gudgeon2020sok}
\bibfield{author}{\bibinfo{person}{Lewis Gudgeon}, \bibinfo{person}{Pedro
  Moreno-Sanchez}, \bibinfo{person}{Stefanie Roos}, \bibinfo{person}{Patrick
  McCorry}, {and} \bibinfo{person}{Arthur Gervais}.}
  \bibinfo{year}{2020}\natexlab{}.
\newblock \showarticletitle{Sok: Layer-two blockchain protocols}. In
  \bibinfo{booktitle}{\emph{FC}}.
\newblock


\bibitem[Guerraoui et~al\mbox{.}(2019)]%
        {guerraoui2019consensus}
\bibfield{author}{\bibinfo{person}{Rachid Guerraoui}, \bibinfo{person}{Petr
  Kuznetsov}, \bibinfo{person}{Matteo Monti}, \bibinfo{person}{Matej
  Pavlovi{\v{c}}}, {and} \bibinfo{person}{Dragos-Adrian Seredinschi}.}
  \bibinfo{year}{2019}\natexlab{}.
\newblock \showarticletitle{The consensus number of a cryptocurrency}. In
  \bibinfo{booktitle}{\emph{Proceedings of the 2019 ACM Symposium on Principles
  of Distributed Computing}}. \bibinfo{pages}{307--316}.
\newblock


\bibitem[Gupta(2016)]%
        {gupta2016non}
\bibfield{author}{\bibinfo{person}{Saurabh Gupta}.}
  \bibinfo{year}{2016}\natexlab{}.
\newblock \bibinfo{booktitle}{\emph{A non-consensus based decentralized
  financial transaction processing model with support for efficient auditing}}.
\newblock \bibinfo{publisher}{Arizona State University}.
\newblock


\bibitem[Hafid et~al\mbox{.}(2019)]%
        {abdelatif2019sharding}
\bibfield{author}{\bibinfo{person}{Abdelatif Hafid},
  \bibinfo{person}{Abdelhakim~Senhaji Hafid}, {and} \bibinfo{person}{Mustapha
  Samih}.} \bibinfo{year}{2019}\natexlab{}.
\newblock \showarticletitle{New Mathematical Model to Analyze Security of
  Sharding-Based Blockchain Protocols}.
\newblock \bibinfo{journal}{\emph{IEEE Access}}  \bibinfo{volume}{7}
  (\bibinfo{year}{2019}), \bibinfo{pages}{185447--185457}.
\newblock


\bibitem[Harris and Zohar(2020)]%
        {harris2020flood}
\bibfield{author}{\bibinfo{person}{Jona Harris} {and} \bibinfo{person}{Aviv
  Zohar}.} \bibinfo{year}{2020}\natexlab{}.
\newblock \showarticletitle{Flood \& loot: A systemic attack on the lightning
  network}. In \bibinfo{booktitle}{\emph{ACM AFT}}.
\newblock


\bibitem[Kiayias et~al\mbox{.}(2017)]%
        {ouroboros2017}
\bibfield{author}{\bibinfo{person}{Aggelos Kiayias}, \bibinfo{person}{Alexander
  Russell}, \bibinfo{person}{Bernardo David}, {and} \bibinfo{person}{Roman
  Oliynykov}.} \bibinfo{year}{2017}\natexlab{}.
\newblock \showarticletitle{Ouroboros: A Provably Secure Proof-of-Stake
  Blockchain Protocol}. In \bibinfo{booktitle}{\emph{Crypto}}.
\newblock


\bibitem[Kokoris~Kogias et~al\mbox{.}(2020)]%
        {kokoris2020asynchronous}
\bibfield{author}{\bibinfo{person}{Eleftherios Kokoris~Kogias},
  \bibinfo{person}{Dahlia Malkhi}, {and} \bibinfo{person}{Alexander
  Spiegelman}.} \bibinfo{year}{2020}\natexlab{}.
\newblock \showarticletitle{Asynchronous Distributed Key Generation for
  Computationally-Secure Randomness, Consensus, and Threshold Signatures.}. In
  \bibinfo{booktitle}{\emph{CCS}}.
\newblock


\bibitem[Kozhaya et~al\mbox{.}(2018)]%
        {kozhaya2018rt}
\bibfield{author}{\bibinfo{person}{David Kozhaya},
  \bibinfo{person}{J{\'e}r{\'e}mie Decouchant}, {and} \bibinfo{person}{Paulo
  Esteves-Verissimo}.} \bibinfo{year}{2018}\natexlab{}.
\newblock \showarticletitle{RT-ByzCast: Byzantine-resilient real-time reliable
  broadcast}.
\newblock \bibinfo{journal}{\emph{IEEE Trans. Comput.}} \bibinfo{volume}{68},
  \bibinfo{number}{3} (\bibinfo{year}{2018}), \bibinfo{pages}{440--454}.
\newblock


\bibitem[Kozhaya et~al\mbox{.}(2021)]%
        {kozhaya2021pistis}
\bibfield{author}{\bibinfo{person}{David Kozhaya},
  \bibinfo{person}{J{\'e}r{\'e}mie Decouchant}, \bibinfo{person}{Vincent
  Rahli}, {and} \bibinfo{person}{Paulo Esteves-Verissimo}.}
  \bibinfo{year}{2021}\natexlab{}.
\newblock \showarticletitle{PISTIS: An Event-Triggered Real-Time
  Byzantine-Resilient Protocol Suite}.
\newblock \bibinfo{journal}{\emph{IEEE Transactions on Parallel and Distributed
  Systems}} \bibinfo{volume}{32}, \bibinfo{number}{9} (\bibinfo{year}{2021}),
  \bibinfo{pages}{2277--2290}.
\newblock


\bibitem[Kumble et~al\mbox{.}(2021)]%
        {kumble2021lightning}
\bibfield{author}{\bibinfo{person}{Satwik~Prabhu Kumble}, \bibinfo{person}{Dick
  Epema}, {and} \bibinfo{person}{Stefanie Roos}.}
  \bibinfo{year}{2021}\natexlab{}.
\newblock \showarticletitle{How Lightning's Routing Diminishes its Anonymity}.
  In \bibinfo{booktitle}{\emph{ARES}}.
\newblock


\bibitem[Lind et~al\mbox{.}(2019)]%
        {lind2018teechain}
\bibfield{author}{\bibinfo{person}{Joshua Lind}, \bibinfo{person}{Oded Naor},
  \bibinfo{person}{Ittay Eyal}, \bibinfo{person}{Florian Kelbert},
  \bibinfo{person}{Emin~G{\"{u}}n Sirer}, {and} \bibinfo{person}{Peter~R.
  Pietzuch}.} \bibinfo{year}{2019}\natexlab{}.
\newblock \showarticletitle{Teechain: a secure payment network with
  asynchronous blockchain access}. In \bibinfo{booktitle}{\emph{{SOSP}}}.
\newblock


\bibitem[Malavolta et~al\mbox{.}(2017)]%
        {malavolta2017concurrency}
\bibfield{author}{\bibinfo{person}{Giulio Malavolta}, \bibinfo{person}{Pedro
  Moreno{-}Sanchez}, \bibinfo{person}{Aniket Kate}, \bibinfo{person}{Matteo
  Maffei}, {and} \bibinfo{person}{Srivatsan Ravi}.}
  \bibinfo{year}{2017}\natexlab{}.
\newblock \showarticletitle{Concurrency and Privacy with Payment-Channel
  Networks}. In \bibinfo{booktitle}{\emph{CCS}}.
\newblock


\bibitem[Malavolta et~al\mbox{.}(2019)]%
        {DBLP:conf/ndss/MalavoltaMSKM19}
\bibfield{author}{\bibinfo{person}{Giulio Malavolta}, \bibinfo{person}{Pedro
  Moreno{-}Sanchez}, \bibinfo{person}{Clara Schneidewind},
  \bibinfo{person}{Aniket Kate}, {and} \bibinfo{person}{Matteo Maffei}.}
  \bibinfo{year}{2019}\natexlab{}.
\newblock \showarticletitle{Anonymous Multi-Hop Locks for Blockchain
  Scalability and Interoperability}. In \bibinfo{booktitle}{\emph{{NDSS}}}.
\newblock


\bibitem[Miller et~al\mbox{.}(2019)]%
        {DBLP:conf/fc/0001BBKM19}
\bibfield{author}{\bibinfo{person}{Andrew Miller}, \bibinfo{person}{Iddo
  Bentov}, \bibinfo{person}{Surya Bakshi}, \bibinfo{person}{Ranjit Kumaresan},
  {and} \bibinfo{person}{Patrick McCorry}.} \bibinfo{year}{2019}\natexlab{}.
\newblock \showarticletitle{Sprites and State Channels: Payment Networks that
  Go Faster Than Lightning}. In \bibinfo{booktitle}{\emph{FC}}.
\newblock


\bibitem[Mizrahi and Zohar(2020)]%
        {mizrahi2020congestion}
\bibfield{author}{\bibinfo{person}{Ayelet Mizrahi} {and} \bibinfo{person}{Aviv
  Zohar}.} \bibinfo{year}{2020}\natexlab{}.
\newblock \showarticletitle{Congestion attacks in payment channel networks}.
\newblock \bibinfo{journal}{\emph{arXiv preprint arXiv:2002.06564}}
  (\bibinfo{year}{2020}).
\newblock


\bibitem[Moreno-Sanchez et~al\mbox{.}(2018)]%
        {moreno2017silentwhispers}
\bibfield{author}{\bibinfo{person}{Pedro Moreno-Sanchez},
  \bibinfo{person}{Aniket Kate}, {and} \bibinfo{person}{Matteo Maffei}.}
  \bibinfo{year}{2018}\natexlab{}.
\newblock \showarticletitle{Silentwhispers: Enforcing security and privacy in
  decentralized credit networks}. In \bibinfo{booktitle}{\emph{NDSS}}.
\newblock


\bibitem[Most{\'e}faoui et~al\mbox{.}(2015)]%
        {mostefaoui2015signature}
\bibfield{author}{\bibinfo{person}{Achour Most{\'e}faoui},
  \bibinfo{person}{Hamouma Moumen}, {and} \bibinfo{person}{Michel Raynal}.}
  \bibinfo{year}{2015}\natexlab{}.
\newblock \showarticletitle{Signature-free asynchronous binary Byzantine
  consensus with t< n/3, O (n2) messages, and O (1) expected time}.
\newblock \bibinfo{journal}{\emph{JACM}} \bibinfo{volume}{62},
  \bibinfo{number}{4} (\bibinfo{year}{2015}), \bibinfo{pages}{1--21}.
\newblock


\bibitem[Naor and Keidar(2022)]%
        {naor2022payment}
\bibfield{author}{\bibinfo{person}{Oded Naor} {and} \bibinfo{person}{Idit
  Keidar}.} \bibinfo{year}{2022}\natexlab{}.
\newblock \showarticletitle{On Payment Channels in Asynchronous Money Transfer
  Systems}.
\newblock \bibinfo{journal}{\emph{CoRR}}  \bibinfo{volume}{abs/2202.06693}
  (\bibinfo{year}{2022}).
\newblock


\bibitem[Network(2022)]%
        {raiden}
\bibfield{author}{\bibinfo{person}{Raiden Network}.}
  \bibinfo{year}{2022}\natexlab{}.
\newblock \bibinfo{booktitle}{\emph{Raiden network}}.
\newblock
\urldef\tempurl%
\url{https://raiden.network/}
\showURL{%
Retrieved 31/01/2022 from \tempurl}


\bibitem[Nisslmueller et~al\mbox{.}(2020)]%
        {nisslmueller2020toward}
\bibfield{author}{\bibinfo{person}{Utz Nisslmueller},
  \bibinfo{person}{Klaus-Tycho Foerster}, \bibinfo{person}{Stefan Schmid},
  {and} \bibinfo{person}{Christian Decker}.} \bibinfo{year}{2020}\natexlab{}.
\newblock \showarticletitle{Toward active and passive confidentiality attacks
  on cryptocurrency off-chain networks}.
\newblock \bibinfo{journal}{\emph{arXiv preprint arXiv:2003.00003}}
  (\bibinfo{year}{2020}).
\newblock


\bibitem[Pagnia and G{\"a}rtner(1999)]%
        {pagnia1999impossibility}
\bibfield{author}{\bibinfo{person}{Henning Pagnia} {and}
  \bibinfo{person}{Felix~C G{\"a}rtner}.} \bibinfo{year}{1999}\natexlab{}.
\newblock \bibinfo{booktitle}{\emph{On the impossibility of fair exchange
  without a trusted third party}}.
\newblock \bibinfo{type}{{T}echnical {R}eport}. \bibinfo{institution}{TU
  Darmstadt}.
\newblock


\bibitem[Pass et~al\mbox{.}(2017)]%
        {DBLP:conf/eurocrypt/PassSS17}
\bibfield{author}{\bibinfo{person}{Rafael Pass}, \bibinfo{person}{Lior Seeman},
  {and} \bibinfo{person}{Abhi Shelat}.} \bibinfo{year}{2017}\natexlab{}.
\newblock \showarticletitle{Analysis of the Blockchain Protocol in Asynchronous
  Networks}. In \bibinfo{booktitle}{\emph{{Eurocrypt}}}.
\newblock


\bibitem[Pearson(2015)]%
        {refBCsycn}
\bibfield{author}{\bibinfo{person}{Jordan Pearson}.}
  \bibinfo{year}{2015}\natexlab{}.
\newblock \bibinfo{title}{WikiLeaks Is Now a Target In the Massive Spam Attack
  on Bitcoin}.
\newblock
\newblock
\newblock
\shownote{Available at: {https://www.vice.com/en/article/ezvw7z
  /wikileaks-is-now-a-target-in-the-massive-spam-attack-on-bitcoin}}.


\bibitem[Poon and Dryja(2016)]%
        {poon2016bitcoin}
\bibfield{author}{\bibinfo{person}{Joseph Poon} {and} \bibinfo{person}{Thaddeus
  Dryja}.} \bibinfo{year}{2016}\natexlab{}.
\newblock \bibinfo{title}{The bitcoin lightning network: Scalable off-chain
  instant payments}.
\newblock
\newblock


\bibitem[Reiter(1994)]%
        {reiter1994secure}
\bibfield{author}{\bibinfo{person}{Michael~K Reiter}.}
  \bibinfo{year}{1994}\natexlab{}.
\newblock \showarticletitle{Secure agreement protocols: Reliable and atomic
  group multicast in Rampart}. In \bibinfo{booktitle}{\emph{CCS}}.
\newblock


\bibitem[Wang et~al\mbox{.}(2017)]%
        {wang2017leaky}
\bibfield{author}{\bibinfo{person}{Wenhao Wang}, \bibinfo{person}{Guoxing
  Chen}, \bibinfo{person}{Xiaorui Pan}, \bibinfo{person}{Yinqian Zhang},
  \bibinfo{person}{XiaoFeng Wang}, \bibinfo{person}{Vincent Bindschaedler},
  \bibinfo{person}{Haixu Tang}, {and} \bibinfo{person}{Carl~A Gunter}.}
  \bibinfo{year}{2017}\natexlab{}.
\newblock \showarticletitle{Leaky cauldron on the dark land: Understanding
  memory side-channel hazards in SGX}. In \bibinfo{booktitle}{\emph{Proceedings
  of the 2017 ACM SIGSAC Conference on Computer and Communications Security}}.
  \bibinfo{pages}{2421--2434}.
\newblock


\bibitem[Yin et~al\mbox{.}(2019)]%
        {yin2019hotstuff}
\bibfield{author}{\bibinfo{person}{Maofan Yin}, \bibinfo{person}{Dahlia
  Malkhi}, \bibinfo{person}{Michael~K Reiter}, \bibinfo{person}{Guy~Golan
  Gueta}, {and} \bibinfo{person}{Ittai Abraham}.}
  \bibinfo{year}{2019}\natexlab{}.
\newblock \showarticletitle{HotStuff: BFT consensus with linearity and
  responsiveness}. In \bibinfo{booktitle}{\emph{PODC}}.
\newblock


\bibitem[Young(2017)]%
        {refBCsycn2}
\bibfield{author}{\bibinfo{person}{Joseph Young}.}
  \bibinfo{year}{2017}\natexlab{}.
\newblock \bibinfo{title}{Analyst: Suspicious Bitcoin Mempool Activity,
  Transaction Fees Spike to 16}.
\newblock
\newblock
\newblock
\shownote{Available at:
  {https://cointelegraph.com/news/analyst-suspicious-bitcoin-mempool-activity-transaction-fees-spike-to-16}}.


\bibitem[Yu et~al\mbox{.}(2019)]%
        {yu2019repucoin}
\bibfield{author}{\bibinfo{person}{Jiangshan Yu}, \bibinfo{person}{David
  Kozhaya}, \bibinfo{person}{Jérémie Decouchant}, {and}
  \bibinfo{person}{Paulo Esteves-Verissimo}.} \bibinfo{year}{2019}\natexlab{}.
\newblock \showarticletitle{Repucoin: Your reputation is your power}.
\newblock \bibinfo{journal}{\emph{IEEE Trans. Comput.}} \bibinfo{volume}{68},
  \bibinfo{number}{8} (\bibinfo{year}{2019}), \bibinfo{pages}{1225--1237}.
\newblock


\bibitem[Zamyatin et~al\mbox{.}(2019)]%
        {zamyatin2019sok}
\bibfield{author}{\bibinfo{person}{Alexei Zamyatin}, \bibinfo{person}{Mustafa
  Al-Bassam}, \bibinfo{person}{Dionysis Zindros}, \bibinfo{person}{Eleftherios
  Kokoris-Kogias}, \bibinfo{person}{Pedro Moreno-Sanchez},
  \bibinfo{person}{Aggelos Kiayias}, {and} \bibinfo{person}{William~J
  Knottenbelt}.} \bibinfo{year}{2019}\natexlab{}.
\newblock \showarticletitle{SoK: communication across distributed ledgers.}
\newblock \bibinfo{journal}{\emph{Cryptology ePrint Archive}}
  (\bibinfo{year}{2019}).
\newblock


\end{thebibliography}

\appendix
\section{Committee formation in \aname{}}
\label{sec:committee}

\begin{figure}[b]
    \centering
    \includegraphics[width=1\columnwidth]{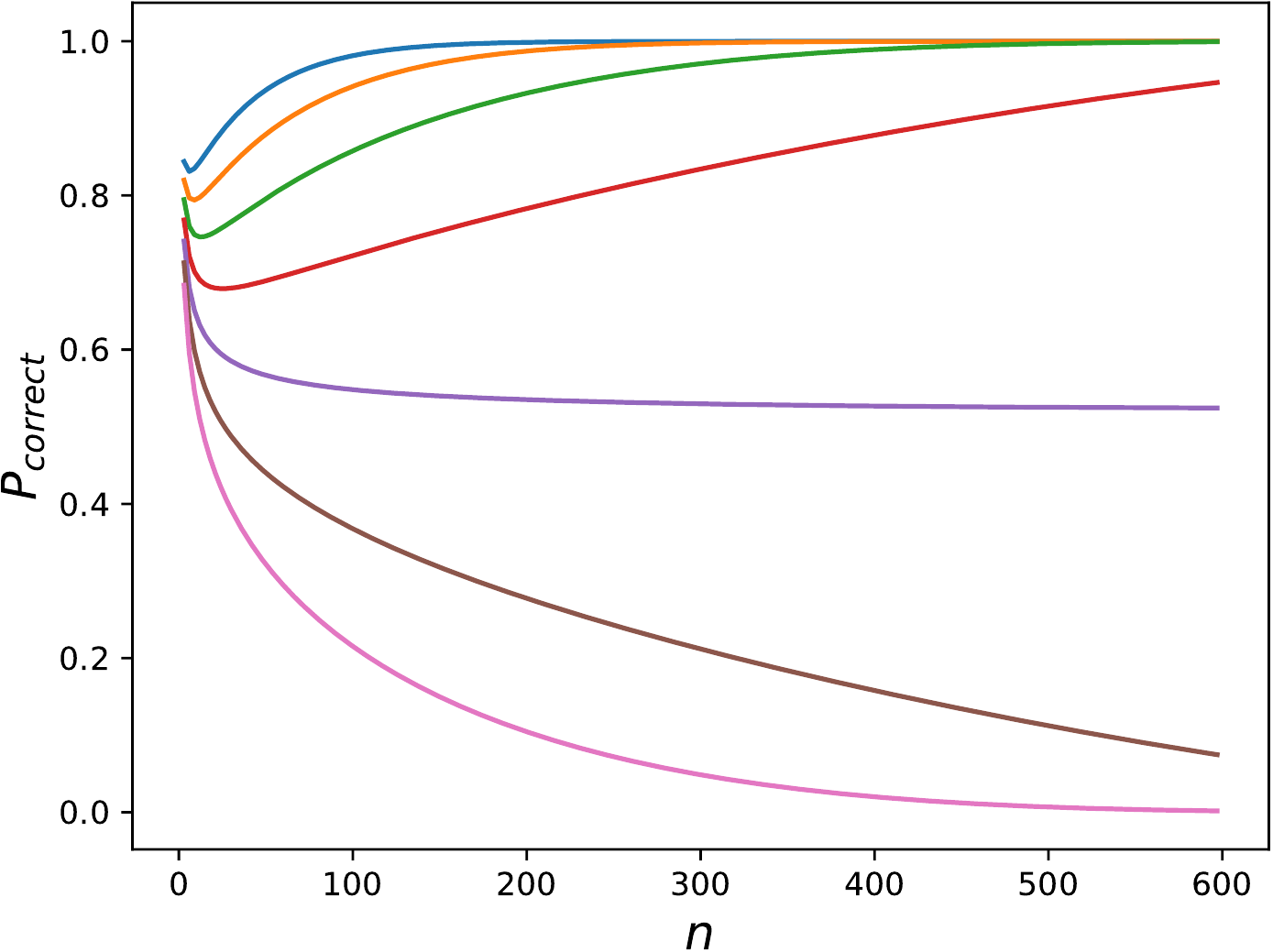}
    \caption{Probability of choosing a committee $P_{correct}$ when $N{=}1200$. We plot the variation of $P_{correct}$ with respect to $n$ for each $F$ in the set $[300,325,350,375,400,425,450]$.}
    \label{fig:committeeselection}
\end{figure}

We assume that there is a global committee with $N$ members from which a smaller committee of size $n$ is sampled during the channel opening to be the channel committee. 
The channel committee members can be selected randomly from the global committee based on any of a number of recent peer selection methods~\cite{bortnikov2009brahms, algorand2017,ouroboros2017,david2021vrf},
which guarantee unbiased uniform selection and termination of the selection protocol. Assuming that the global committee has $F$ faulty members, we can compute the probability $P_{correct}$ for a correct committee ($f\leq\lfloor n/3\rfloor)$) to be selected. The probability $p_f$ that the selected committee has $f$ faulty members follows a hyper-geometric distribution~\cite{abdelatif2019sharding} and is equal to $\frac{{F\choose f}\cdot {{N-F}\choose {n-f}}}{{N\choose n}}$. It follows that $P_{correct} = \sum_{f=1}^{\lfloor n/3 \rfloor} p_f$. While selecting a bigger channel committee does increase the likelihood of the committee to be correct, a bigger channel committee also implies higher latency and bandwidth use. 

Fig.~\ref{fig:committeeselection} indicates the probability for a sampled channel committee to contain more than $f$ faulty members.
For this figure, we assume a global committee of size $N=1200$, and vary the value of $F \in [300,325,350,375,400,425,450]$.  One can observe that $P_{correct}$ increases with $n$ when $F<N/3$ and increases with $n$ when $F>N/3$.  For $N/3$, it converges.    
When $F=N/4$, we see that $P_{correct}$ is nearly $1$ for all values of $n$. As the value of $F$ moves closer to $N/3$, a higher value of $n$ is required for $P_{correct}$ to approach $1$.  

The probabilities to sample a faulty committee of size 300 with F being 300 and 325 are almost equal to 1, namely $0.999$ and $0.998$, respectively. 
We therefore select committees of size $300$ for our performance evaluation, and assume a global committee that contains less than one-fourth of faulty members. 

Large BFT committees have limited performance when used in blockchains to process the transactions of all users. In our protocols,  a committee handles the transactions of a single channel and is therefore not expected to provide the same performance under high workload as a blockchain consensus algorithm.  
There is a small probability for a sampled committee to be faulty (i.e., contain more than $f$ faulty members).

\end{document}